\def\lastpage@putlabel{}
\numberwithin{equation}{section}
\newcommand\boldB{{\boldsymbol{B}}}
\newcommand\boldL{{\boldsymbol{L}}}
\newcommand\boldWhat{\hat{\boldsymbol{W}}}
\newcommand{\calA}{\mathcal{A}}
\newcommand{\calC}{\mathcal{C}}
\newcommand{\calD}{\mathcal{D}}
\newcommand{\calE}{\mathcal{E}}
\newcommand{\calR}{\mathcal{R}}
\newcommand{\Comp}{\mathbb{C}}
\newcommand{\der}{\partial}
\DeclareMathOperator{\diag}{diag}
\renewcommand{\epsilon}{\varepsilon}
\newcommand{\Integer}{\mathbb{Z}}
\newcommand\Mat{\operatorname{Mat}\nolimits}
\newcommand\ord{\operatorname{ord}\nolimits}
\newcommand\semicomm[1]{\underset{#1}\sim}
\newcommand\val{\operatorname{val}}
\newcommand\lemref[1]{Lemma~\ref{#1}}
\newcommand\thmref[1]{Theorem~\ref{#1}}
\newcommand\propref[1]{Proposition~\ref{#1}}
\newcommand\corref[1]{Corollary~\ref{#1}}
\newcommand\remref[1]{Remark~\ref{#1}}
\newcommand\secref[1]{\S\ref{#1}}
\newtheorem{thm}{Theorem}[section]
\newtheorem{prop}[thm]{Proposition}
\newtheorem{lem}[thm]{Lemma}
\newtheorem{cor}[thm]{Corollary}
\theoremstyle{definition}
\newtheorem{defn}[thm]{Definition}
\theoremstyle{remark}
\newtheorem{rem}[thm]{Remark}
\newcommand\commentout[1]{}
\begin{document}
\title%
{General Zakharov-Shabat equations without Lax operators}
\author{Masatoshi~Noumi\thanks{Masatoshi NOUMI:
Department of Mathematics, Rikkyo University, Toshima-Ku, Tokyo
171-8501, Japan; passed away on 20 November 2024.}
\and
Takashi~Takebe\thanks{Takashi TAKEBE:
Beijing Institute of Mathematical Sciences and Applications,
No.~544, Hefangkou Village, Huaibei Town, Huairou District,
Beijing, 101408, People's Republic of China;
e-mail: takebe@bimsa.cn }
}
\date{}



\maketitle
\begin{abstract}
 The operators in the Zakharov-Shabat equations of integrable
 hierarchies are usually defined from the Lax operators. In this article
 it is shown that the Zakharov-Shabat equations themselves recover the
 Lax operators under suitable change of independent variables in the
 case of the KP hierarchy and the modified KP hierarchy (in the matrix
 formulation).
\end{abstract}


\section*{Introduction}
\label{sec:intro}

The KP hierarchy, which Sato introduced in \cite{sat:81},
(see also \cite{sat:89-1}, \cite{sat:89-2}, \cite{sat-nou:84},
\cite{sat-sat}), is a system of differential equations for a
microdifferential (or pseudodifferential) operator
\begin{equation*}
 \begin{split}
    L
    &=
    \der + u_{-1}(x,t) \der^{-1} + u_{-2}(x,t) \der^{-2} + \dotsb
\\
    &=
    \sum_{j=0}^\infty u_{1-j}(x,t) \der^{1-j}, 
    \qquad
    u_1=1,\ u_0=0,\ \der=\der_x,
\end{split}
\end{equation*}
with independent variables $x$ and $t=(t_1,t_2,\dotsc)$. The system has
several formulations, among which we consider the {\em Lax
representation},
\[
    \frac{\der L}{\der t_n}=[B_n, L], \quad
    B_n:= (L^n)_+,\  n=1,2,\dotsc,
\]
($[P,Q]=PQ-QP$; see \eqref{def:diff-part} for the notation $P_+$), and
the {\em Zakharov-Shabat representation},
\[
    \left[
     \frac{\der}{\der t_m} - B_m,
     \frac{\der}{\der t_n} - B_n
    \right]
    =
    \frac{\der B_m}{\der t_n} - \frac{\der B_n}{\der t_m}
    + [B_m, B_n] = 0,
    \quad m,n=1,2,\dotsc,
\]
which can be interpreted as a {\em zero-curvature condition}. 
(Other formulations are the Sato equations for the wave operator or the
Hirota bilinear equations for the tau-function, upon which we do not
touch in this article.)

It is easy to prove that the Lax representation and the Zakharov-Shabat
representation are equivalent. However, in some problems, the
Zakharov-Shabat equations {\em without a Lax operator} appear. (See, for
example, \cite{kri:92}.) A natural question is: Do those equations
without a Lax operator define a system equivalent to the KP hierarchy?

In this article we prove that general Zakharov-Shabat equations for
differential operators $B_n$ of order $n$ ($n=1,2,\dotsc$) imply the
existence of the Lax operator which satisfies the KP hierarchy after
suitable coordinate changes. We also prove a similar statement for the
modified KP hierarchy in the matrix formulation. In both cases we first
construct the $L$-operators satisfying the Lax equations by rather
complicated induction, using ``semicommutativity'' guaranteed by the
Zakharov-Shabat equations. Then we prove that those $L$-operators
reproduce the given $B$-operators by usual procedures for the KP
hierarchy or the modified KP hierarchy with coordinate changes.

The case for the KP hierarchy was published in \cite{take:92} with
numerous typos and insufficient arguments. We correct those typos and
fill gaps in the proofs in the present article. But the gaps in the
proof of \cite{take:92} for the Toda lattice hierarchy are still
open. The half of that proof for the Toda lattice hierarchy is used
with corrections in the proof for the modified KP hierarchy.

\bigskip
This article is divided into two parts: the first part is on the KP
hierarchy and the second part is on the matrix modified KP hierarchy. In
\secref{subsec:review:kp} we review the definition of the KP hierarchy
and its basic properties, in particular, the equivalence of the Lax
representation and the Zakharov-Shabat representation in the usual
setting. In \secref{subsec:semicomm} we introduce the notion of
``semicommutativity'' which will be used in the inductive
construction of the $L$-operator in
\secref{subsec:construction-L:kp}. This $L$ is a solution of the KP
hierarchy with suitably changed independent variables, as shown in
\secref{subsec:recovery-kp}.

The structure of \secref{sec:general-zs->lax:mKP} for the matrix
modified KP hierarchy is almost the same as that of
\secref{sec:general-zs->lax:kp}. First we review the matrix formulation
of the modified KP hierarchy introduced by Dickey \cite{dic:99} in
\secref{subsec:review:mkp}. (See also \cite{take:02}.) The construction
of the $L$-operator is done by induction in
\secref{subsec:construction-L:mkp}. The final step to the mKP hierarchy
in \secref{subsec:recovery-mkp} is a little bit different from the
approach for the KP hierarchy in \secref{subsec:recovery-kp}.

\section{Zakharov-Shabat equations for the KP hierarchy}
\label{sec:general-zs->lax:kp}

\subsection{Review of the KP hierarchy}
\label{subsec:review:kp}

First let us briefly review facts about microdifferential operators.
(For details of the theory of microdifferential operators we refer to
\cite{scha} and \cite{kas:83}.)

We consider microdifferential operators over $\calR = \calC[[x]]$ or
$\calR =\calC((x))$, where the ring of constants $\calC$ is $\calA[[t]]$
for a ($\Comp$-)algebra $\calA$. Here we mean by a {\em microdifferential
operator} a formal series of the form
\begin{equation}
    P = \sum_{k\in\Integer} a_k \der^k, \quad 
    (a_k \in \calR,\ a_k = 0 \text{ for }k\gg 0).
\label{def:microdiff}
\end{equation}
Product of two microdifferential operators
\begin{equation}
    P = \sum_{k \in\Integer} a_k \der^k, \qquad
    Q = \sum_{k \in\Integer} b_k \der^k,
\label{microdiff-P,Q}
\end{equation}
is defined by the extended Leibniz rule,
\begin{equation}
    PQ =
    \sum_{k \in\Integer} 
    \left(\sum_{l+m-r=k, r\geq 0} \binom{l}{r} a_l b_m^{(r)} \right)
    \der^k.
\label{mult:microdiff}
\end{equation}
The set of microdifferential operators forms an associative
$\calR$-algebra, which we denote by $\calE$. Its subset of differential
operators is denoted by $\calD$:
\begin{equation}
    \calD:=\left\{P = \sum_{k\geq0} a_k \der^k \in \calE\right\}.
\label{def:diff-op}
\end{equation}
The integer
\begin{equation}
    \ord (P) := \max \{k\in\Integer\,|\, a_k \neq 0 \, \}
\label{def:order-E}
\end{equation}
is called the {\em order} of the microdifferential operator $P$. (We
define $\ord(0)=-\infty$.) The order defines the filtration
$(\calE(n))_{n\in\Integer}$ of $\calE$: 
\begin{equation}
    \calE(n) :=
    \{ P \in \calE \,|\, \ord(P) \leqq n \}.
\label{def:order-filtration}
\end{equation}
Any microdifferential operator $P$ \eqref{def:microdiff} is decomposed
uniquely into the {\em differential operator part} $P_+ \in \calD$
and the {\em negative order part} $P_- \in \calE(-1)$ as follows:
\begin{equation}
\begin{gathered}
    P = P_+ + P_-,\\
    P_+ :=  \sum_{k\geq 0} a_k \der^k,\quad
    P_- :=  \sum_{k\leq -1} a_k \der^k.
\end{gathered}
\label{def:diff-part}
\end{equation}
Namely $\calE = \calD \oplus \calE(-1)$. We call an operator $P$ monic,
if it has the form $P=\der^n + (\text{terms of order}<n)$, and denote
the set of monic operators by $\calE^{\text{monic}}$.

\bigskip
Using the above terminology, we can formulate the KP hierarchy as
follows. The {\em KP hierarchy} is a system of differential equations
for a first-order microdifferential operator
\begin{equation}
    L
    = \der + u_{-1} \der^{-1} + u_{-2} \der^{-2} + \cdots
    = \sum_{i=0}^\infty u_{1-i} \der^{1-i},
\label{lax-op:kp}
\end{equation}
with $u_1 = 1, u_0 = 0, u_i = u_i(t,x) \in \calR$, in other words, a
system of differential equations for a series of functions $\{u_i\}_i$. 

The {\em Lax representation} of the KP hierarchy is
\begin{equation}
    \frac{\der L}{\der t} = [B_n, L], \quad
    B_n := (L^n)_+, \qquad n=1,2,\dotsc.
\label{lax-eq:kp}
\end{equation}
Another description of the KP hierarchy is the {\em Zakharov-Shabat
representation} or the {\em zero-curvature equations},
\begin{equation}
    \left[
     \frac{\der}{\der t_m}-B_m, \frac{\der}{\der t_n}-B_n
    \right]
     =
     \frac{\der B_n}{\der t_m} - \frac{\der B_m}{\der t_n} + [B_m,B_n]
     = 0,
     \quad
     m,n=1,2,\dotsc.
\label{zs:kp}
\end{equation}
It is not difficult to show that, if $L$ satisfies the Lax equations
\eqref{lax-eq:kp}, then it gives a solution of the Zakharov-Shabat
equations \eqref{zs:kp}, and that, if $B_n$'s defined from $L$ satisfy
the Zakharov-Shabat equations, then $L$ is a solution of the Lax
equations. In fact, if $L$ satisfies the Lax equations, its powers
satisfy 
\begin{equation}
    \frac{\der L^m}{\der t_n} = [B_n, L^m],
\label{[Bn,Lm]}
\end{equation}
for all $m$ and $n$. Subtracting \eqref{[Bn,Lm]} from the same
equation with $m$ and $n$ interchanged, we have
\begin{equation}
 \begin{split}
    \frac{\der L^m}{\der t_n} - \frac{\der L^n}{\der t_m}
    &= [B_n, L^m] - [B_m, L^n].
\\
    &= [B_n, B_m] + [(L^m)_-, (L^n)_-].
\end{split}
\label{[Bn,Bm]}
\end{equation}
The differential operator part of this equation gives the
Zakharov-Shabat equation \eqref{zs:kp}. Conversely, if $L$ defines the
$B$-operators satisfying the Zakharov-Shabat equations \eqref{zs:kp} for
any $m$ and $n$, then we have
\begin{equation}
    \frac{\der L^m}{\der t_n} - [B_n, L^m]
    =
    \frac{\der B_n}{\der t_m} + \frac{\der (L^m)_-}{\der t_n}
    - [B_n, (L^m)_-],
\label{[der-B,L]+zs}
\end{equation}
whose right-hand side is of order not more than $n-1$ and, in
particular, bounded for any $m$. If the Lax equation \eqref{lax-eq:kp}
does not hold and
\begin{equation}
    \frac{\der L}{\der t_n} - [B_n, L] 
    =
    a \der^M + (\text{lower order terms}), \qquad
    a\neq 0,
\label{[der-B,L]:tmp2}
\end{equation}
it is easy to see that the left-hand side of \eqref{[der-B,L]+zs} is
equal to
\[
 \begin{split}
    \frac{\der L^m}{\der t_n} - [B_n, L^m]
    &=
    \sum_{k=0}^{m-1} L^{k}
    \left(\frac{\der L}{\der t_n} - [B_n, L] \right)
    L^{m-k-1}
\\
    &= m a \der^{M+m-1} + (\text{lower order terms}),
 \end{split}
\]
the order of which is not bounded when $m$ increases. This is a
contradiction.  Hence the left-hand side of \eqref{[der-B,L]:tmp2}
should vanish and $L$ satisfies the Lax equation \eqref{lax-eq:kp}.

The above derivation of the Lax equations from the Zakharov-Shabat
representation heavily depends on the existence of the Lax operator
$L$. The goal of this article is to derive the Lax equations from the
Zakharov-Shabat equations without starting from the Lax operator.

\subsection{Semicommutative operators}
\label{subsec:semicomm}

In this subsection we prove several facts about ``almost'' commuting
microdifferential operators. The main tool here is the estimate of the
order of the commutator of microdifferential operators,
\begin{equation}
    \ord([A,B])\leqq \ord(A)+\ord(B)-1, \qquad
    A, B\in\calE,
\label{ord[A,B]}
\end{equation}
which is easily proved. 

\begin{prop}
\label{prop:semicommutative}
 Let $P=\der^m+p_{m-1}(x)\der^{m-1}+p_{m-2}(x)\der^{m-2}+\dotsb$ be a
 monic microdifferential operator $\in\calE(m)$ and $L$ its monic $m$-th
 root: $L=\der+u_0(x)+u_{-1}(x)\,\der^{-1}+\dotsb$, $L^m=P$. Then for
 any $Q\in\calE(n)$ the following three statements are equivalent
 ($r\geqq0$).

\begin{enumerate}
\renewcommand{\labelenumi}{(\roman{enumi})}
 \item $[P,Q]\in\calE(m+n-1-r)$.
 \item $[L,Q]\in\calE(n-r)$.
 \item In the expansion 
\begin{equation}
    Q=a_n(x)\, L^n + a_{n-1}(x)\, L^{n-1}+\dotsb,
\label{Q=sum(akLk)}
\end{equation}
       the first $r$ coefficients are constants: $\der a_i(x)=0$
       ($i=n,n-1,\dotsc,n-r+1$).
\end{enumerate}
\end{prop}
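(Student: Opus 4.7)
The plan is to prove the direct implications (iii) $\Rightarrow$ (ii) and (iii) $\Rightarrow$ (i) from a single computation, and then to obtain the converses by reverse induction on the coefficient index. As a preliminary, since $L$ is monic of order $1$ each power $L^k$ is monic of order $k$, so any $Q\in\calE(n)$ admits a unique expansion \eqref{Q=sum(akLk)} determined recursively by matching leading coefficients.

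The central computation is the leading behaviour of the commutators $[L,a]$ and $[P,a]=[L^m,a]$ for a function $a\in\calR$. A direct application of the extended Leibniz rule \eqref{mult:microdiff} yields
\begin{equation*}
    [L,a] = a' + (\text{terms of order}\leqq -2), \qquad
    [L^m,a] = m\,a'\,\der^{m-1} + (\text{terms of order}\leqq m-2),
\end{equation*}
so both commutators \emph{vanish identically} precisely when $\der a=0$. Since $L^k$ commutes with both $L$ and $P=L^m$, the expansion of $Q$ gives
\begin{equation*}
    [L,Q]=\sum_k[L,a_k]\,L^k, \qquad [P,Q]=\sum_k[L^m,a_k]\,L^k,
\end{equation*}
whose $k$-th summands have order at most $k$ and $m+k-1$ respectively. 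The implications (iii) $\Rightarrow$ (ii) and (iii) $\Rightarrow$ (i) follow at once: if $a_n,\dotsc,a_{n-r+1}$ are constants, the top $r$ summands vanish and the rest satisfy the required order bounds.

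For the converses (ii) $\Rightarrow$ (iii) and (i) $\Rightarrow$ (iii), I would perform reverse induction on $s=0,1,\dotsc,r-1$, establishing $\der a_{n-s}=0$ at each stage. The inductive hypothesis makes $a_n,\dotsc,a_{n-s+1}$ constant, so those summands vanish entirely, and the top term of $[L,Q]$ (resp.\ $[P,Q]$) is $[L,a_{n-s}]L^{n-s}$ (resp.\ $[L^m,a_{n-s}]L^{n-s}$) with leading coefficient $a_{n-s}'\,\der^{n-s}$ (resp.\ $m\,a_{n-s}'\,\der^{m+n-s-1}$). The hypothesis $[L,Q]\in\calE(n-r)\subseteq\calE(n-s-1)$ (resp.\ $[P,Q]\in\calE(m+n-r-1)\subseteq\calE(m+n-s-2)$) then forces $\der a_{n-s}=0$. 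The main subtlety---and the reason one needs the full identity $[L,a]=0$ rather than merely that $[L,a]$ has negative order---is that the lower-order parts of the already-vanishing summands must not pollute the leading term of the current one; the identity $[L,\mathrm{const}]=0$ makes them truly zero, so the cascade proceeds cleanly.
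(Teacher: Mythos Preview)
Your proposal is correct and follows essentially the same approach as the paper: expand $Q$ in powers of $L$, write $[P,Q]=\sum_k[P,a_k]L^k$ (and likewise for $[L,Q]$), and use the leading term $[P,a]=m\,a'\,\der^{m-1}+\dotsb$ to read off which coefficients must be constant. The only cosmetic difference is that the paper closes the cycle via (iii)$\Rightarrow$(ii)$\Rightarrow$(i)$\Rightarrow$(iii), phrasing the last step as a contrapositive (``let $a_k$ be the first non-constant coefficient''), whereas you prove (i)$\Rightarrow$(iii) and (ii)$\Rightarrow$(iii) separately by forward induction on $s$; the underlying computation is identical.
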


\begin{rem}
\label{rem:root-of-diff-op}
 The monic $m$-th root $L$ of $P$ always exists, as the condition
 $L^m=P$ gives relations of $u_i$'s and $p_j$'s as differential
 polynomials and by solving these relations recursively, we can express
 each $u_i$ as a differential polynomial of $p_j$'s.
 
 In particular, $u_0(x)=p_{m-1}(x)/m$. Hence, if $p_{m-1}(x)=0$ and $P$
 is of the form $\der^m + p_{m-2}(x)\der^{m-2}+\dotsb$, then $u_0(x)=0$
 and $L=\der + u_{-1}(x)\der^{-1}+\dotsb$.
\end{rem}

\begin{rem}
\label{rem:expansion-by-monic:kp}
 An expansion like \eqref{Q=sum(akLk)} always exists. In fact, if
 $Q=q_n(x)\der^n+(\text{terms of order }<n)$, we take $q_n(x)$ as
 $a_n(x)$. Then $Q':= Q-a_n(x) L^n$ is of order $<n$ and we can find
 $a_k$'s recursively.
\end{rem}

\begin{proof}[Proof of \propref{prop:semicommutative}]
 The proofs of (ii) from (iii) and of (i) from (ii) are
 straightforward. 

 To prove (iii) from (i), assume that $[P,Q]\in\calE(m+n-1-r)$. For a
 non-constant element $a\not\in\calC$, $\ord([P,a])=\ord(m(\der
 a)\der^{m-1}+ (\text{terms with orders}<m-1))=m-1$. Hence, if
 $a_{k+1},a_{k+2},\dotsc, a_n$ in the expansion \eqref{Q=sum(akLk)} are
 constants and $a_k\not\in\calC$, then $\ord([P,Q])=(m-1)+k$, as
\[
 \begin{split}
    [P,Q] &=
    \sum_{j\leq n} [P, a_j] L^j
\\
    &= [P,a_k] L^k + [P,a_{k-1}] L^{k-1} + \dotsb.
 \end{split}
\]
 Since $[P,Q]\in\calE(m+n-1-r)$, we have $m-1+k\leqq m+n-1-r$, i.e.,
 $k\leqq n-r$, which implies (iii).
\end{proof}

\begin{defn}
\label{def:r-commutative}
 Monic operators $P,Q \in\calE^{\text{monic}}$ satisfying the
 conditions in \propref{prop:semicommutative} are called {\em
 $r$-commutative} and we denote $P\semicomm{r}Q$. In general, if
 $P\semicomm{r}Q$ for some $r\geqq0$, we say that $P$ and $Q$ are
 semicommutative. 
\end{defn}

\begin{cor}
\label{cor:r-commutative=equiv}
 The relation $\semicomm{r}$ is an equivalence relation between monic
 operators. 
\end{cor}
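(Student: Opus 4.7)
The plan is to verify the three axioms of an equivalence relation for a fixed $r\geqq 0$, using the equivalent characterizations provided by \propref{prop:semicommutative}. Reflexivity is immediate from condition~(i) since $[P,P]=0$. Symmetry also follows from~(i): $[P,Q]\in\calE(m+n-1-r)$ iff $[Q,P]\in\calE(n+m-1-r)$, because the order bound is symmetric in $m$ and $n$. So the only substantive step is transitivity.

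For transitivity, suppose $P\semicomm{r}Q$ and $Q\semicomm{r}R$, with $\ord P=p$, $\ord Q=q$, $\ord R=s$; let $L$ and $M$ denote the monic $p$-th and $q$-th roots of $P$ and $Q$ respectively. The strategy is to prove $[L,R]\in\calE(s-r)$, which is condition~(ii) for $P\semicomm{r}R$. The first key step converts condition~(ii) applied to $P\semicomm{r}Q$, namely $[L,M^q]=[L,Q]\in\calE(q-r)$, into a sharper bound on $[L,M]$. Writing $[L,M^q]=\sum_{i+j=q-1}M^i[L,M]M^j$, each summand has order at most $\ord[L,M]+q-1$, and since $M$ is monic the leading coefficient at that order equals the leading coefficient of $[L,M]$; summing the $q$ terms and using $q\neq 0$ in characteristic zero yields $\ord[L,M]\leqq 1-r$. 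An easy induction (using Leibniz and $[L,M^{-1}]=-M^{-1}[L,M]M^{-1}$) then gives $[L,M^k]\in\calE(k-r)$ for all $k\in\Integer$.

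The second step expands $R$ in powers of $M$ via condition~(iii) for $Q\semicomm{r}R$: $R=\sum_{k\leqq s}b_k M^k$, with $b_s=1$ and $b_{s-1},\dotsc,b_{s-r+1}$ constants. Then $[L,R]=\sum_k\bigl([L,b_k]M^k+b_k[L,M^k]\bigr)$. In the first half, $[L,b_k]=0$ whenever $b_k$ is constant, and otherwise has order $\leqq 0$, so the surviving terms lie in $\calE(k)\subseteq\calE(s-r)$. In the second half, $b_k[L,M^k]\in\calE(k-r)\subseteq\calE(s-r)$ for every $k\leqq s$. Therefore $[L,R]\in\calE(s-r)$, and condition~(ii) yields $P\semicomm{r}R$. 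The main obstacle is the first step — extracting a bound on $[L,M]$ from one on $[L,M^q]$ — but the leading-coefficient computation dispatches it cleanly, after which the rest is routine bookkeeping in the filtration by order.
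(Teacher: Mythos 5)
Your proof is correct, but it follows a genuinely different route from the paper's. The paper verifies condition (i) of \propref{prop:semicommutative} directly: it expands \emph{both} outer operators $P$ and $R$ in powers of the monic root of the middle operator $Q$ (with the first $r$ coefficients constant, by (iii) and symmetry of (i)), splits $[P,R]$ into two commutators — one where $P$ is truncated below order $\ord(P)-r$, one where $R$ is — and bounds each by the order estimate \eqref{ord[A,B]}; no leading-coefficient computation is needed. You instead verify condition (ii): from $[L,Q]\in\calE(\ord(Q)-r)$ you first extract the sharper statement $[L,M]\in\calE(1-r)$ for the two first-order roots via the telescoping identity $[L,M^q]=\sum_{i+j=q-1}M^i[L,M]M^j$ and the fact that the top coefficients add up to $q$ times that of $[L,M]$ (monicity plus characteristic zero), and then expand only $R$ in powers of $M$. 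Your route costs this extra leading-coefficient cancellation step (and implicitly assumes $\ord(Q)\geqq1$ so that the telescoping formula applies — harmless in the paper's setting, where the same assumption underlies the notion of monic root), but it buys the intermediate fact that the roots themselves satisfy $L\semicomm{r}M$, which is close in spirit to how the paper later argues in \propref{prop:semicomm-Bn}; the paper's argument is shorter and purely filtration-theoretic. Your dispatch of reflexivity and symmetry via condition (i) matches the paper's implicit treatment (it only checks transitivity).
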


\begin{proof}
 We have only to check that $P\semicomm{r}Q$ and $Q\semicomm{r}R$
 imply $P\semicomm{r}R$ for monic microdifferential operators $P$, $Q$
 and $R$.

 \propref{prop:semicommutative} says that there are expansions
\[
    P = \sum_{k\leq \ord(P)} p_k(x) L^k, \qquad
    R = \sum_{l\leq \ord(R)} r_l(x) L^l,
\]
 where $\der p_k(x) = \der r_l(x) = 0$ for $\ord(P)-r+1\leqq k \leqq
 \ord(P)$ and $\ord(R)-r+1\leqq l \leqq \ord(R)$. Therefore,
\[
 \begin{split}
    [P,R] &= 
    \left[
     \sum_{k\leq \ord(P)} p_k(x) L^k, \sum_{l\leq \ord(R)} r_l(x) L^l
    \right]
\\
    &=
    \left[
     \sum_{k\leq \ord(P)-r} p_k(x) L^k, \sum_{l\leq \ord(R)} r_l(x) L^l
    \right]
\\
    &\qquad +
    \left[
     \sum_{\ord(P)-r+1\leq k\leq \ord(P)} p_k(x) L^k, 
     \sum_{l\leq \ord(R)-r} r_l(x) L^l
    \right].
 \end{split}
\]
 Both commutators in the last expression have orders not more than
 $\ord(P)+\ord(R)-r-1$. Hence $P\semicomm{r}R$ by
 \propref{prop:semicommutative} (i).
\end{proof}

\begin{prop}
\label{prop:semicomm-Bn}
 Let $\{B_n\}_{n=1,2,\dotsc}$ be a sequence of monic differential
 operators of order $n$:
 $B_n=\der^{n}+b_{n,n-1}\der^{n-1}+b_{n,n-2}\der^{n-2}+\dotsb+b_{n,0}$.

 For any $r\geqq0$ the following three statements are equivalent:

\begin{enumerate}
\renewcommand{\labelenumi}{(\roman{enumi})}
 \item For any $n$ and $m$, if $n\leqq m$, then
       $[B_n,B_m]\in\calE(m-1-r)$, i.e., $B_n\underset{n+r}\sim B_m$.
 \item There exists a monic first-order microdifferential operator $L$
       such that $B_n\underset{n+r}\sim L$ for any $n$.
 \item There exists a monic first-order microdifferential operator $L$
       such that in the expansion $B_n= \sum_{k\leq n}a_{n,k} L^k$,
       $a_{n,n},\dotsc,a_{n,1-r}\in\calC$ for any $n$.
\end{enumerate}

 Moreover, if each $B_n$ is of the form $\der^n + b_{n,n-2}\der^{n-2} +
 \dotsb + b_{n,0}$, then $L$ is of the form $\der +
 u_{-1}\der^{-1}+\dotsb$. 
\end{prop}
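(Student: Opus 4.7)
The plan is to establish the three equivalences in the order (ii) $\Leftrightarrow$ (iii), (iii) $\Rightarrow$ (i), (i) $\Rightarrow$ (ii), and then dispatch the ``moreover'' clause. The first equivalence is an immediate translation through \propref{prop:semicommutative} applied with $P = L$ (its own first-order root), $Q = B_n$, and parameter $r' = n + r$: condition (ii) of the present proposition becomes $[L, B_n] \in \calE(-r)$, exactly (ii) of \propref{prop:semicommutative}, while the first $r'$ expansion coefficients $a_{n,n}, a_{n,n-1}, \dotsc, a_{n,1-r}$ are precisely those appearing in (iii). For (iii) $\Rightarrow$ (i) I would imitate \corref{cor:r-commutative=equiv}: split $B_n = C_n + D_n$ where $C_n := \sum_{k \geq 1-r} a_{n,k} L^k \in \calE(n)$ has constant coefficients and $D_n := \sum_{k \leq -r} a_{n,k} L^k \in \calE(-r)$; since $C_n$ and $C_m$ are Laurent polynomials in $L$ with constant coefficients, they commute, so
\[
    [B_n, B_m] = [C_n, D_m] + [D_n, C_m] + [D_n, D_m]
\]
is a sum of three operators of orders at most $n - r - 1$, $m - r - 1$ and $-2r - 1$ by \eqref{ord[A,B]}, whose maximum is $\max(n,m) - 1 - r$.

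The heart of the proof is (i) $\Rightarrow$ (ii). I would construct $L = \der + v_0 + v_{-1}\der^{-1} + \dotsb$ by a finite induction of length $r$ on the index $j = 0, 1, \dotsc, r-1$, leaving $v_{-j}$ for $j \geq r$ free (say, equal to $0$). A direct expansion shows that the coefficient of $\der^{n-j-1}$ in $L^n$ depends on $v_{-j}$ linearly with the nonzero factor $n$, so the would-be coefficient $a_{n, n-j-1}$ in the expansion $B_n = \sum_k a_{n,k} L^k$ equals $-n v_{-j}$ plus a universal differential polynomial in the $b_{n,\cdot}$'s and in the previously fixed $v_0, \dotsc, v_{-j+1}$. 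At step $j$, I would fix $v_{-j}$, up to an additive constant in $\calC$, by demanding the constancy of $a_{n_0, n_0-j-1}$ for one convenient choice of $n_0$ (for instance, the smallest $n$ for which (iii) constrains this coefficient: $n_0 = 1$ if $r \geq 1$, $n_0 = j + 2$ if $r = 0$), and then must verify that the same $v_{-j}$ renders $a_{n, n-j-1}$ a constant for every other admissible $n$.

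The main obstacle is this cross-consistency verification. The plan is to recognize $\der a_{n, n-j-1}$, after substituting the inductive formulas for the earlier $a$'s and $v$'s, as a nonzero numerical multiple of a specific high-order coefficient of the commutator $[B_{n_0}, B_n]$, whose vanishing is guaranteed by (i). The bookkeeping is combinatorially intricate, which accounts for the ``rather complicated induction'' advertised in the introduction; an alternative and possibly cleaner packaging works with the $N$-th roots $L_N := B_N^{1/N}$, which by \propref{prop:semicommutative} combined with (i) already satisfy (iii) for every $B_m$ with $m \leq N$, and then assembles the desired $L$ as a coherent formal limit $\lim_N L_N$ after absorbing the constant-level discrepancies between successive $L_N$'s. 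Finally, for the ``moreover'' clause: if $b_{n,n-1} = 0$ for every $n$, step $0$ reduces to $a_{n,n-1} = -n v_0$, so the required constancy forces $v_0 \in \calC$; replacing $L$ by $L - v_0$ (still a monic first-order microdifferential operator) transforms the $a_{n,k}$ by an invertible upper-triangular linear substitution with entries in $\calC$, so the constancy of all $a_{n,k}$ with $k \geq 1 - r$ is preserved and $L$ may be taken in the form $\der + u_{-1}\der^{-1} + \dotsb$.
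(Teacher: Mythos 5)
Your treatment of (ii) $\Leftrightarrow$ (iii) and of (iii) $\Rightarrow$ (i) is correct and coincides with the paper's own argument (the paper performs exactly your splitting into constant-coefficient and low-order parts). The problem is in (i) $\Rightarrow$ (ii), where your primary construction is set up incorrectly: the induction cannot be ``finite of length $r$'', and the coefficients $v_{-j}$ for $j\geqq r$ cannot be left free. Condition (iii) requires $a_{n,k}\in\calC$ for all $k$ down to $1-r$; for fixed $j$ the coefficient $a_{n,n-j-1}$ contains $-n\,v_{-j}$ and is constrained for every $n\geqq j+2-r$, so \emph{every} $v_{-j}$, $j=0,1,2,\dotsc$, gets pinned down as $n$ grows. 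Taking $r=0$ makes the failure plain: your plan terminates immediately and outputs $L=\der$, yet $[\der,B_n]=\sum_k(\der b_{n,k})\der^k$ is generically of order $n-1$, nowhere near $\calE(0)$. (Your own parenthetical ``$n_0=j+2$ if $r=0$'' already concedes that the induction must run over all $j$.) Even after repairing the induction length, the step you yourself flag as the ``main obstacle'' --- that the one function $v_{-j}$ dictated by a single $n_0$ makes $a_{n,n-j-1}$ constant for \emph{every} other admissible $n$ --- is the entire content of the implication, and asserting that $\der a_{n,n-j-1}$ ``is'' a nonzero multiple of a coefficient of $[B_{n_0},B_n]$ after the inductive substitutions is precisely what would need to be proved.

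Your alternative one-sentence sketch is in fact the paper's actual argument and is the route to develop. The paper sets $L_n:=B_n^{1/n}$, deduces $L_n\semicomm{n+r}L_m$ from (i) via \propref{prop:semicommutative}, and builds a convergent sequence $\tilde L_n$ with $\tilde L_n\semicomm{n+r}L_n$ and $\tilde L_n-\tilde L_{n-1}\in\calE(2-n-r)$; the inductive step (\lemref{lem:construction-tildeLn}) expands $L_n$ in powers of $M=\tilde L_{n-1}$ and subtracts exactly the constant-coefficient terms $a_0+a_{-1}M^{-1}+\dotsb+a_{2-s}M^{2-s}$ --- your ``constant-level discrepancies'' --- after which transitivity of $\semicomm{r}$ (\corref{cor:r-commutative=equiv}) and passage to the limit yield (ii). This packaging sidesteps the cross-consistency problem entirely, because the constancy of the relevant expansion coefficients with respect to each individual $L_n$ is already supplied by \propref{prop:semicommutative} applied to $[B_n,B_m]$. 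Your argument for the ``moreover'' clause (forcing $v_0\in\calC$ from $a_{n,n-1}=-nv_0+b_{n,n-1}$ and then translating $L$ by the constant $v_0$) is a legitimate finish once a correct $L$ exists; the paper instead notes that each root $L_n$, and hence each correction, has vanishing order-zero coefficient by \remref{rem:root-of-diff-op}.
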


\begin{proof}
 The equivalence of (ii) and (iii) is a consequence of
 \propref{prop:semicommutative}. Indeed, by taking $B_n$ as $Q$ and
 $n+r$ as $r$ in \propref{prop:semicommutative}, statements (ii) and
 (iii) there correspond to statements (ii) and (iii) in
 \propref{prop:semicomm-Bn} respectively. 

 If all $B_n$'s are expanded as in (iii) with constant
 $a_{n,n},\dotsc,a_{n,1-r}$, then the commutator of $B_n$ and $B_m$
 ($n\leqq m$) is
\[
 \begin{split}
    [B_n,B_m] &=
    \left[
      \sum_{k\leq n}a_{n,k} L^k, \sum_{l\leq m}a_{m,l} L^l
    \right]
\\
    &=
    \left[ \sum_{k\leq -r}a_{n,k} L^k, B_m \right]
    +
    \left[
      \sum_{1-r\leq k\leq n}a_{n,k} L^k, \sum_{l\leq -r}a_{m,l} L^l
    \right].
 \end{split}
\]
 The first commutator in the last line is an operator of order not
 more than $-r+m-1$, while the second commutator is an operator of order
 not more than $n-r+1$. Therefore $\ord([B_n,B_m])\leqq m-r-1$, which
 implies (i).

 Conversely, let us assume the statement (i). We denote the monic $n$-th
 root of $B_n$ by $L_n$:
\begin{equation}
    L_n = \der + u_{n,0}(x) + u_{n,-1} \der^{-1} + \dotsb,\qquad
    B_n = (L_n)^n.
\label{Ln=root-of-Bn}
\end{equation}
 Then, \propref{prop:semicommutative} implies $L_n\semicomm{n+r}L_m$ if
 $n\leqq m$ because of $B_n\semicomm{n+r}B_m$.

 We construct a sequence of monic first-order operators %
 $\{\tilde L_n\}_{n=1,2,\dotsc}$ which satisfies
\begin{equation}
    \tilde L_n \semicomm{n+r} L_n, \qquad
    \tilde L_{n} - \tilde L_{n-1} \in \calE(2-n-r).
\label{tildeLn}
\end{equation}
 Due to the second condition in \eqref{tildeLn} the sequence has a limit
 in the topology of $\calE$ by the order, $L:=\lim_{n\to\infty} \tilde
 L_n$. The operator $\tilde L_n$ differs from $L$ by an operator $\tilde
 L_n^c:=\tilde L_n-L$ of order not more than $1-n-r$ as $\tilde
 L_{k+1}-\tilde L_k\in\calE(1-k-r)\subset\calE(1-n-r)$ for $k\geqq
 n$. Hence,
\[
    [L,\tilde L_n] = [L,\tilde L_n^c]
    \in \calE(1+(1-n-r)-1) = \calE(1-n-r),
    \text{ i.e., } L\semicomm{n+r}\tilde L_n.
\]
 Since $\tilde L_n\semicomm{n+r}L_n$ by the first condition in
 \eqref{tildeLn}, we have $L\semicomm{n+r}L_n$ by
 \corref{cor:r-commutative=equiv}.

 Now \propref{prop:semicommutative} implies that
 $L\semicomm{n+r}B_n$, which is nothing but the statement (ii) of
 \propref{prop:semicomm-Bn}. 

\medskip
 The construction of $\tilde L_n$ is by induction. We start from $\tilde
 L_1=L_1$ and use the following lemma to find $\tilde L_n$ from $\tilde
 L_{n-1}$.

\begin{lem}
\label{lem:construction-tildeLn}
 Let $M$ and $N$ be monic first-order operators which are mutually
 $s$-commutative: $M\semicomm{s}N$. Then there exists a monic
 first-order operator $\tilde N$ such that 
\begin{equation}
    \tilde N \semicomm{s+1} N, \qquad
    \tilde N - M \in\calE(1-s).
\label{M,N->tildeN}
\end{equation}
\end{lem}

 Applying this lemma to $M=\tilde L_{n-1}$ and $N=L_n$ (in this case
 $s=n-1+r$), we obtain $\tilde L_n$ as $\tilde N$. (The condition
 $M\semicomm{s}N$, i.e., $\tilde L_{n-1}\semicomm{n-1+r}L_n$ follows
 from $L_{n-1}\semicomm{n-1+r}L_n$ which we showed after
 \eqref{Ln=root-of-Bn} and the induction hypothesis $\tilde
 L_{n-1}\semicomm{n-1+r}L_{n-1}$.) 

\begin{proof}[Proof of \lemref{lem:construction-tildeLn}]
 \propref{prop:semicommutative} implies that in the expansion
\[
    N = \sum_{n\leq1}a_n M^n = M + a_0 + a_{-1} M^{-1} + \dotsb,
\]
 the first $s$ coefficients $a_1=1$, $a_0, \dotsc, a_{2-s}$ are
 constant. Let us define $\tilde N$ by 
\[
 \begin{split}
    \tilde N &:= N - (a_0 + a_{-1} M^{-1} + \dotsb + a_{2-s}M^{2-s})
\\
    &=
    M + a_{1-s} M^{1-s} + a_{-s} M^{-s} + \dotsb
    \in M + \calE(1-s),
 \end{split}
\]
 which satisfies the second condition in \eqref{M,N->tildeN}. 

 Since $M\semicomm{s}N$, one has $M^n\semicomm{s}N$, i.e.,
 $[M^n,N]\in\calE(n+1-s)$ for any $n$ by
 \propref{prop:semicommutative}. Therefore $[M^n,N]\in\calE(-s)$ for
 $n\leqq-1$. Hence,
\[
    [\tilde N,N] = [N,N] - \sum_{n=2-s}^0 a_n[M^n,N]
    = - \sum_{n=2-s}^{-1} a_n[M^n,N] \in\calE(-s),
\]
 which is the first condition \eqref{M,N->tildeN}. 
\end{proof}

 The last statement of \propref{prop:semicomm-Bn} is due to
 \remref{rem:root-of-diff-op} and the above construction of $L$.
\end{proof}

\subsection{Construction of the Lax operator of the KP hierarchy}
\label{subsec:construction-L:kp}

In this subsection we assume the Zakharov-Shabat equations for the
sequence $\{B_n\}_{n=1,2,\dotsb}$ of monic differential operators and
construct a first order monic microdifferential operator which satisfies
the Lax equations with the given $B_n$'s.

\begin{prop}
\label{prop:general-zs->lax:kp}
 Let $\{B_n\}_{n=1,2,\dotsc}$ be a sequence of monic differential
 operators of order $n$: $\ord(B_n)=n$. If this sequence satisfies the
 system of Zakharov-Shabat equations,
\begin{equation}
    \left[
     \frac{\der}{\der t_n}-B_n, \frac{\der}{\der t_m}-B_m
    \right]
    =
    \frac{\der B_n}{\der t_m} - \frac{\der B_m}{\der t_n} +[B_n,B_m]
    = 0,
\label{general-zs:kp}
\end{equation}
 for $n,m = 1,2,\dotsc$, then there exists a monic first-order
 microdifferential operator $L=\der+u_0+u_{-1}\der^{-1}+\dotsb$ which
 satisfies the Lax equations
\begin{equation}
    \left[\frac{\der}{\der t_n}-B_n,L\right]
    =
    \frac{\der L}{\der t_n} - [B_n,L]
    = 0.
\label{general-lax:kp}
\end{equation}
 for $n=1,2,\dotsc$.

 Moreover, if each $B_n$ is of the form $\der^n + b_{n,n-2}\der^{n-2} +
 \dotsb + b_{n,0}$, then $u_0=0$ and $L$ is of the form $\der +
 u_{-1}\der^{-1}+\dotsb$.
\end{prop}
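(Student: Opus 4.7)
The plan is to combine the semicommutativity results of \secref{subsec:semicomm} with an estimate on how closely the monic $m$-th root of $B_m$ satisfies the Lax equation, and then to pass to the limit. First I would verify that the Zakharov-Shabat equations \eqref{general-zs:kp} already imply the hypothesis of \propref{prop:semicomm-Bn} with $r=0$: since each $B_k$ is monic with leading coefficient $1$, the identity $[B_n,B_m]=\der B_n/\der t_m-\der B_m/\der t_n$ shows $[B_n,B_m]\in\calE(m-1)$ for $n\leqq m$, i.e., $B_n\semicomm{n}B_m$. For each $m$, let $L_m$ be the monic $m$-th root of $B_m$ from \remref{rem:root-of-diff-op}, so $B_m=L_m^m$. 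Differentiating this identity in $t_n$ and using $[B_n,L_m^m]=\sum_{k=0}^{m-1}L_m^k[B_n,L_m]L_m^{m-1-k}$ together with the Zakharov-Shabat equation yields
\begin{equation*}
\sum_{k=0}^{m-1}L_m^k\bigl(\der L_m/\der t_n-[B_n,L_m]\bigr)L_m^{m-1-k}=\der B_n/\der t_m\in\calE(n-1).
\end{equation*}
Since each $L_m^k$ is monic of order $k$, comparing leading orders forces $\der L_m/\der t_n-[B_n,L_m]\in\calE(n-m)$, so $L_m$ satisfies the $t_n$-Lax equation with an error whose order tends to $-\infty$ as $m\to\infty$.

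Following the inductive pattern in the proof of \propref{prop:semicomm-Bn}, I would iteratively modify $L_m$ to $\tilde L_m$ via \lemref{lem:construction-tildeLn} (applied at each step to $M=\tilde L_{m-1}$ and $N=L_m$) so that $\tilde L_m-\tilde L_{m-1}\in\calE(2-m)$ and $\tilde L_m\semicomm{m}L_m$. The sequence then converges in the order topology to $L:=\lim_m\tilde L_m$, with $L-\tilde L_m\in\calE(1-m)$. The crux is to verify that this modification preserves the Lax estimate: for each fixed $n$, that $\der\tilde L_m/\der t_n-[B_n,\tilde L_m]\in\calE(f(m))$ for some function $f(m)\to-\infty$. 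Granting this, the decomposition
\begin{equation*}
\der L/\der t_n-[B_n,L]=\der(L-\tilde L_m)/\der t_n+\bigl(\der\tilde L_m/\der t_n-[B_n,\tilde L_m]\bigr)+[B_n,\tilde L_m-L]
\end{equation*}
shows that $\der L/\der t_n-[B_n,L]$ lies in $\calE(\max(1-m,f(m),n-m))$ for every $m$ and hence vanishes.

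The main obstacle is the inductive Lax estimate just stated. Each modification has the form $\tilde L_m=L_m-\sum_{i=3-m}^{0}a_i\tilde L_{m-1}^i$ with $a_i\in\calC=\calA[[t]]$; these coefficients are $x$-constants but may depend on the $t$-variables, so $\der a_i/\der t_n$ need not vanish. Controlling the resulting terms requires strengthening the induction hypothesis to track the convergence error $\tilde L_m-\tilde L_{m-1}$ and the Lax error $\der\tilde L_m/\der t_n-[B_n,\tilde L_m]$ simultaneously, using the full compatibility of the Zakharov-Shabat flows (not only the pairwise semicommutativity of the $B_k$) to ensure that the $\der a_i/\der t_n$-contributions themselves have sufficiently negative order. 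The last clause ($u_0=0$ when $b_{n,n-1}=0$) is a direct consequence of \remref{rem:root-of-diff-op}: the hypothesis forces the $\der^0$-coefficient of every $L_m$ to vanish, a property preserved by the modification (since in that case the constants $a_0$ to be subtracted are themselves zero) and by the limit.
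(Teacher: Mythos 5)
Your first stage is sound: the Zakharov--Shabat equations give $[B_n,B_m]\in\calE(m-1)$, \propref{prop:semicomm-Bn} (with $r=0$) produces a monic first-order $L$ with $[B_n,L]\in\calE(0)$, and your order-counting argument for the roots, yielding $\der L_m/\der t_n-[B_n,L_m]\in\calE(n-m)$, is correct (and is a nice observation not used in the paper). But the step you yourself single out as ``the main obstacle'' is precisely where all the substance of the proposition lies, and describing it as something that ``requires strengthening the induction hypothesis'' is not a proof. Concretely: the limit $L=\lim_m\tilde L_m$ you construct satisfies only the semicommutativity $[B_n,L]\in\calE(0)$. The Lax defect $E_n:=\der L/\der t_n-[B_n,L]$ picks up, from each correction $\sum_i a_i(t)\,\tilde L_{m-1}^i$, terms $(\der a_i/\der t_n)\,\tilde L_{m-1}^i$ of order up to $0$, and nothing in your argument forces these to become small; so you obtain only $E_n\in\calE(0)$, not $E_n=0$. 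The good estimate $\calE(n-m)$ for the roots $L_m$ is lost in the limit because the $L_m$ themselves do not converge, only their modifications $\tilde L_m$ do.

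The paper closes exactly this gap with a \emph{second} induction, on the order $-r$ of the Lax defect rather than on $m$. Writing $\der L_r/\der t_n-[B_n,L_r]=g_n\der^{-r}+\dotsb$ and expanding $B_m=\sum_{k}\varphi_{m,k}L_r^k$, one substitutes into $\der B_n/\der t_m=\der B_m/\der t_n-[B_n,B_m]$ and compares orders to get $\der\varphi_{m,k}/\der t_n=0$ for $k>m-1-r$ and $\der\varphi_{m,m-1-r}/\der t_n=-mg_n$ for $m>n+r$. This yields both $\der g_n=0$ and the crucial integrability statement that all the $g_n$ are simultaneously $\der_{t_n}f$ for a single $x$-independent function $f(t)=\lim_{m\to\infty}\bigl(-\varphi_{m,m-1-r}/m\bigr)$; then $L_{r+1}:=L_r-fL_r^{-r}$ has defect in $\calE(-r-1)$, and the limit over $r$ satisfies the exact Lax equations. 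Some version of this computation --- identifying the leading defect as an exact $t$-gradient with $x$-constant coefficients --- is unavoidable and is entirely missing from your proposal. The same remark applies to the final clause: your argument that $a_0=0$ covers only the semicommutativity stage, whereas one must also check that the order-zero correction $-fL_0^{0}$ of the second stage does not reintroduce a $\der^0$ term (in the paper this is automatic because when $b_{n,n-1}=0$ one has $[B_n,B_m]\in\calE(m-2)$ and the defect already starts in $\calE(-1)$).
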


\begin{proof}
 Since $\der_{t_m}B_n - \der_{t_n} B_m\in\calE(m-1)$ for $n\leqq m$, the
 Zakharov-Shabat equation \eqref{general-zs:kp} implies
 $[B_n,B_m]\in\calE(m-1)$. \propref{prop:semicomm-Bn} ($r=0$) guarantees
 existence of the operator $L_0$ of the form
 $L_0=\der+u_{0,0}(x)+u_{0,-1}(x)\der^{-1}+\dotsb$ which satisfies
 $[B_n,L_0]\in\calE(0)$ for any $n$.

 Starting from $L_0$, we construct a sequence $\{L_r\}_{r=0,1,2\dotsc}$
 such that 
\begin{equation}
    \frac{\der L_r}{\der t_n} - [B_n, L_r] \in \calE(-r)
\label{lax-up-to-E(-r)}
\end{equation}
 for any $n$ and
\begin{equation}
    L_{r+1}-L_r \in \calE(-r).
\label{Lr+1-Lr-in-E(-r)}
\end{equation}
 It is obvious that $L_0$ satisfies \eqref{lax-up-to-E(-r)} ($r=0$) by
 construction. 

 Assume that $L_r$ satisfies \eqref{lax-up-to-E(-r)} for $r\geqq 0$. We
 construct $L_{r+1}$ from $L_r$ in the form
\begin{equation}
    L_{r+1} = L_r - f L_r^{-r},
\label{Lr+1}
\end{equation}
 where $\der f = 0$. As $L_{r+1}-L_r=-f L_r^{-r}\in\calE(-r)$, the
 condition \eqref{Lr+1-Lr-in-E(-r)} is satisfied.

 Let us define functions $g_n$ and $\varphi_{n,k}$ by
\begin{gather*}
    \left[\frac{\der}{\der t_n}-B_n, L_r\right]
    =
    \frac{\der L_r}{\der t_n} - [B_n,L_r] = g_n \der^{-r} + \dotsb,
\\
    B_n = \sum_{k\leq n} \varphi_{n,k}L_r^k, \qquad \varphi_{n,n}=1.
\end{gather*}
 Since $[B_n,L_r]\in\calE(0)$ by assumption \eqref{lax-up-to-E(-r)},
 $B_n\semicomm{n}L_r$. Hence it follows from \propref{prop:semicomm-Bn}
 (iii) that $\der \varphi_{n,k}=0$ for $k\geqq1$.

 Let us compute
\[
    R :=
    \frac{\der B_m}{\der t_n} - [B_n,B_m]
    =
    \frac{\der B_n}{\der t_m} \in \calE(n-1),
\]
 using the expansion of $B_m$. By the definition of $g_n$, we have
\begin{equation}
     \left[\frac{\der}{\der t_n}-B_n, L_r^k \right]
     =
     kg_n\der^{k-1-r} + \dotsb
\label{[d-Bn,Lrk]} 
\end{equation}
 for any $k\in\Integer$. Hence, as $\der\varphi_{m,k}=0$ for $k\geqq1$,
\[
 \begin{split}
    R &= 
    \left[
     \frac{\der}{\der t_n}-B_n, \sum_{k\leq m}\varphi_{m,k}L_r^k
    \right]
\\
    &=
    \sum_{k\leq m} 
    \left[
     \frac{\der}{\der t_n}-B_n, \varphi_{m,k}
    \right]L_r^k
    +
    \sum_{k\leq m} 
    \varphi_{m,k} (kg_n\der^{k-1-r} + \dotsb)
\\
    &=
    \sum_{k\leq m} \frac{\der\varphi_{m,k}}{\der t_n} L_r^k
    -
    \sum_{k\leq 0} [B_n, \varphi_{m,k}] L_r^k
    +
    \sum_{k\leq m} 
    \varphi_{m,k} (kg_n\der^{k-1-r} + \dotsb).
 \end{split}
\]
 Since $R$ and the second sum in the last expression are of order not
 more than $n-1$,
\[
    -\sum_{k\leq m} \frac{\der\varphi_{m,k}}{\der t_n} L_r^k
    \equiv
    m g_n \der^{m-1-r} + (\text{lower order terms})
    \mod\calE(n-1).
\]
 Therefore, if $m>n+r$, 
\begin{equation}
    \frac{\der \varphi_{m,k}}{\der t_n}
    =
    \begin{cases}
    0, &(k>m-1-r),\\
    -m g_n, &(k=m-1-r).
    \end{cases}
\label{dvarphimk/dtn}
\end{equation}
 Differentiating the case $k=m-1-r$ by $x$, we have
\begin{equation}
    -m\, \der g_n = \frac{\der}{\der t_n} \der\varphi_{m,m-1-r}.
\label{-mdgn:kp}
\end{equation}
 Let us take $m$ larger than $n+r+1$. Then $m>n+r+1>n+r$ and
 $m-1-r>n\geqq1$. Therefore by \eqref{-mdgn:kp} and
 $\der\varphi_{n,k}=0$ ($k\geqq 1$), we have $\der g_n=0$.

 To find $f$ in \eqref{Lr+1}, set
\begin{equation}
     f_m(t) := -\frac{1}{m}(\varphi_{m,m-1-r}(t)-\varphi_{m,m-1-r}(t=0)).
\label{correction-term:kp:fm}
\end{equation}
 Then, by \eqref{dvarphimk/dtn}, when $m>n+r+1$, we have
 $\der_{t_n}f_m=g_n$ which does not depend on $m$. This means that $f_m$
 has a limit $f:=\lim_{m\to\infty}f_m$ in $\Comp[[t_1,t_2,\dotsc]]$ with
 the topology by valuation ($\val t_n = n$). (Note that $f_m(t=0)=0$ for
 any $m$.) We have $\der_{t_n}f = g_n$ and, since $\varphi_{m,k}$
 ($k\geqq1$) is constant with respect $x$, $\der f =0$.

 If we set $L_{r+1}:=L_r - fL_r^{-r}$ as in \eqref{Lr+1}, $L_{r+1}$
 satisfies
\[
 \begin{split}
    \left[\frac{\der}{\der t_n}-B_n, L_{r+1}\right]
    &=
    \left[\frac{\der}{\der t_n}-B_n, L_{r}\right]
    -
    \left[\frac{\der}{\der t_n}-B_n, f L_{r}^{-r}\right]
\\
    &=
    \bigl(g_n \der^{-r} + (\text{terms of order\,}\leqq -r-1) \bigr)
\\
    &\qquad
    -\bigl(
      \frac{\der f}{\der t_n} L_r^{-r} 
      +
      f\left[\frac{\der}{\der t_n}-B_n, L_{r}^{-r}\right]
    \bigr)
\\
    &=
    f\left[\frac{\der}{\der t_n}-B_n, L_{r}^{-r}\right]
    + (\text{terms of order\,}\leqq -r-1),
 \end{split}
\]
 which is of order not more than $-r-1$ by \eqref{[d-Bn,Lrk]}. Hence, we
 have an operator $L_{r+1}$ satisfying the conditions
 \eqref{lax-up-to-E(-r)} for $r\mapsto r+1$.

 Thanks to the condition \eqref{Lr+1-Lr-in-E(-r)}, the sequence has a
 limit $L:=\lim_{r\to\infty}L_r$ in the topology by filtration of
 $\calE$. The limit $L$ satisfies the Lax equations
 \eqref{general-lax:kp} as the limit of \eqref{lax-up-to-E(-r)}.

 The final statement of \propref{prop:general-zs->lax:kp} follows from
 the final statement of \propref{prop:semicomm-Bn} and the construction
 of $L$.

 This completes the proof of \propref{prop:general-zs->lax:kp}.
\end{proof}

\subsection{Recovery of the KP hierarchy}
\label{subsec:recovery-kp}

Finally, we connect the above result with the KP hierarchy by means of
a coordinate change.

\begin{thm}
\label{thm:general-zs->lax:kp}
 Let $\{B_n\}_{n=1,2,\dotsb}$ be a sequence of monic differential
 operators of the form
\begin{equation}
    B_n = \der^n + b_{n,n-1}\der^{n-1} + \dotsb + b_{n,0},
\label{general-B-op:kp}
\end{equation}
 satisfying the system of Zakharov-Shabat equations
 \eqref{general-zs:kp} for $n,m=1,2,\dotsc$. Then there exists a monic
 first-order microdifferential operator,
\begin{equation}
    L=\der+u_0(x)+u_{-1}(x)\der^{-1}+\dotsb,
\label{general-L-op:kp}
\end{equation}
 a coordinate change,
\begin{equation}
    t_n \mapsto \tilde t_n = t_n + O(t_{n+1},t_{n+2},\dotsc)
\label{coord-change:general-zs->kp}
\end{equation}
 and a function $f_0=f_0(x,t_1,t_2,\dotsc)$ such that:
\begin{itemize}
 \item The operator $L$ satisfies Lax equations,
\begin{equation}
    \frac{\der L}{\der \tilde t_n}
    =
    [(L^n)_+, L],
\label{kp-h:general}
\end{equation}
 with respect to the new time variables $(\tilde t_n)_{n=1,2,\dotsc}$.  

 \item Each $B_n$ is expressed in terms of $L$ as
\begin{equation}
    B_n
    =
    \tilde B_n + \frac{\der\tilde t_{n-1}}{\der t_n} \tilde B_{n-1}
    + \dotsb +
    \frac{\der\tilde t_1}{\der t_n} \tilde B_1 +
    \frac{\der f_0}{\der t_n}, \quad
    \tilde B_n := e^{f_0}(L^n)_+ e^{-f_0}.
\label{Bn->tildeBn}
\end{equation}
\end{itemize}

 Moreover, if $b_{n,n-1}=0$ in \eqref{general-B-op:kp} for all $n$, then
 $u_0=0$, which means that $L$ is a Lax operator of the KP hierarchy
 with respect to the variables $(\tilde t_n)_{n=1,2,\dotsc}$. Moreover
 $\der f_0=0$, from which it follows that $\tilde B_n = (L^n)_+$.
\end{thm}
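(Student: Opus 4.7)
The plan is to apply \propref{prop:general-zs->lax:kp} to extract $L$, expand each $B_n$ in powers of $L$, derive the compatibility of the expansion coefficients from the Zakharov--Shabat equations, and integrate to define $\tilde t_n$ and $f_0$.

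First I would apply \propref{prop:general-zs->lax:kp} to obtain a monic $L=\der+u_0+u_{-1}\der^{-1}+\dotsb$ satisfying $\der L/\der t_n=[B_n,L]$ for all $n$. Since $[B_n,L]\in\calE(0)$, \propref{prop:semicommutative} (applied with $Q=B_n$ and $r=n$) yields the unique expansion
\begin{equation*}
    B_n=\sum_{k=0}^{n}a_{n,k}(L^k)_+,\qquad a_{n,n}=1,
\end{equation*}
with $a_{n,k}\in\calC$ for $k\geq1$ and $a_{n,0}=a_{n,0}(x,t)$ possibly $x$-dependent. Reading off the coefficient of $\der^0$ in $\der L/\der t_n=[B_n,L]$ gives the scalar relation $\der u_0/\der t_n=-\der a_{n,0}/\der x$.

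Next I would derive the key compatibility. Let $V_m:=\der/\der t_m-\mathrm{ad}(B_m)$, a derivation on $\calE$ that kills $L$ (hence every $L^j$). The Zakharov--Shabat equations are equivalent to $[V_m,V_n]=0$, which gives $V_m(B_n)-V_n(B_m)=[B_n,B_m]$. Expansion then yields $V_m(B_n)=\sum_l V_m(a_{n,l})L^l$, where for $l\geq1$ the coefficient reduces to the scalar $\der_m a_{n,l}$ (since $a_{n,l}\in\calC$ commutes with $B_m$), while for $l\leq0$ it is an operator of positive order. Matching scalar $L^k$-coefficients ($k\geq1$) should yield the cross-derivative relation
\begin{equation*}
    \frac{\der a_{n,k}}{\der t_m}=\frac{\der a_{m,k}}{\der t_n}\quad(k\geq 1),
\end{equation*}
and for $k=0$, differentiating the scalar Lax relation in $t_m$ and symmetrizing gives $\der_x(\der_m a_{n,0}-\der_n a_{m,0})=0$. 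The hard part will be isolating the scalar compatibility cleanly from the operator-valued $V_m(a_{n,l})$ for $l\leq0$: the commutator contributions $[B_m,a_{n,j}]L^j$ must be carefully re-expanded in the $L^k$-basis using the order estimates of \propref{prop:semicommutative}.

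Using the compatibility, the triangular system $\der\tilde t_k/\der t_n=a_{n,k}$ ($k\geq1$), with $\tilde t_k-t_k=O(t_{k+1},t_{k+2},\dotsc)$, admits a unique formal-power-series solution thanks to the valuation $\val t_n=n$, and $f_0(x,t)$ is determined (up to an additive constant, and possibly after a harmless shift of the $\tilde t_n$ by a function of $t$ to achieve exact integrability) by $\der f_0/\der t_n=a_{n,0}$ with $\der_x f_0$ governed by the scalar Lax relation. By the chain rule for $t_n\mapsto\tilde t_n$ and matching $[B_n,L]$ against $\sum_k(\der\tilde t_k/\der t_n)[(L^k)_+,L]$---where the residual $[a_{n,0},L]$ contribution is absorbed via $\der f_0/\der t_n$ and the conjugation by $e^{\pm f_0}$---one obtains the KP Lax equations $\der L/\der\tilde t_n=[(L^n)_+,L]$. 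The identity $e^{f_0}(L^k)_+e^{-f_0}=(\tilde L^k)_+$ for $\tilde L:=e^{f_0}Le^{-f_0}$ (valid because conjugation by a zeroth-order multiplier preserves the positive/negative decomposition) then rewrites the $(L^k)_+$-expansion of $B_n$ into the theorem's stated formula. Finally, if $b_{n,n-1}=0$ for all $n$, then \propref{prop:general-zs->lax:kp} forces $u_0=0$; the scalar Lax relation then gives $\der_x a_{n,0}=0$, so $a_{n,0}$ is constant in $x$, whence $f_0$ may be taken independent of $x$ and $\tilde B_n=(L^n)_+$ as claimed.
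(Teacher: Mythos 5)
Your overall architecture is the same as the paper's: apply \propref{prop:general-zs->lax:kp}, expand each $B_n$ in powers of the resulting operator, extract cross-derivative compatibility of the coefficients from the Zakharov--Shabat equations, integrate to define $\tilde t_k$ and $f_0$, and finish with the chain rule and conjugation by $e^{f_0}$. But the central step is precisely what you leave open. To define $f_0$ you need the \emph{exact} relations $\der_{t_m}a_{n,0}=\der_{t_n}a_{m,0}$, and to define the $\tilde t_k$ you need them for all $k\geq1$. Your sketch only says the $k\geq1$ relations ``should'' follow from matching $L^k$-coefficients, while conceding that the operator-valued contributions $[B_m,a_{n,l}]L^l$ for $l\leq0$ (which reach order $m-1$) contaminate exactly those orders; and for $k=0$ you settle for $\der_x(\der_{t_m}a_{n,0}-\der_{t_n}a_{m,0})=0$, which is strictly weaker: an $x$-independent obstruction $c_{nm}(t)$ could survive, in which case no $f_0$ with $\der f_0/\der t_n=a_{n,0}$ exists, and a ``harmless shift of the $\tilde t_n$'' cannot remove a defect sitting in the $k=0$ component. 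The paper closes this gap by rewriting the Lax equation as $[\der_{t_n}-B_n^c-\varphi_{n,0},\tilde L]=0$ with $B_n^c:=-\sum_{k=0}^n\varphi_{n,k}(\tilde L^k)_-$ of negative order; substituting $\der_{t_n}-B_n=(\der_{t_n}-B_n^c-\varphi_{n,0})-\sum_{k\geq1}\varphi_{n,k}\tilde L^k$ into the Zakharov--Shabat bracket makes every unwanted term land in $\calE(-1)$, so the bracket equals $\sum_{k\geq0}(\der_{t_m}\varphi_{n,k}-\der_{t_n}\varphi_{m,k})\tilde L^k$ plus a negative-order operator, and downward induction in $k$ gives the exact compatibility for all $k\geq0$, including $k=0$. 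Without this device (or an equivalent one) your construction of $f_0$, hence of the coordinate change and of \eqref{Bn->tildeBn}, does not go through.

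A secondary point: when $f_0$ depends on $x$, the theorem's $L$ cannot be the operator furnished by \propref{prop:general-zs->lax:kp} itself; the residual term $[a_{n,0},L]$ in $\der L/\der t_n=[B_n,L]$ is removed only by passing to the conjugate $e^{-f_0}Le^{f_0}$, and it is this conjugate that satisfies \eqref{kp-h:general}, while $\tilde B_k=e^{f_0}\bigl((e^{-f_0}Le^{f_0})^k\bigr)_+e^{-f_0}=(L^k)_+$ then turns your expansion $B_n=\sum_{k\geq1}a_{n,k}(L^k)_++a_{n,0}$ into \eqref{Bn->tildeBn}. Your last paragraph conjugates in the opposite direction ($\tilde L:=e^{f_0}Le^{-f_0}$) and asserts the clean Lax equation for the unconjugated operator, so the identification of which operator is the theorem's $L$ needs to be straightened out; this is minor compared with the compatibility gap, but it is where the absorption of $[a_{n,0},L]$ actually takes place.
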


\begin{proof}
 Let us denote the operator $L$ obtained in
 \propref{prop:general-zs->lax:kp} by $\tilde L$. Expanding $B_n$ as a
 power series of $\tilde L$,
\[
    B_n = \sum_{k=0}^\infty \varphi_{n,k} {\tilde L}^k,
    \quad \varphi_{n,n}=1,
\]
 and taking the differential operator part, we obtain the expression
 of $B_n$ as follows:
\begin{equation}
    B_n = \sum_{k=0}^n \varphi_{n,k} \tilde B_k, \qquad
    \tilde B_n := ({\tilde L}^n)_+,
\label{Bn->tildeBn:temp}
\end{equation}
 Since $[B_n, \tilde L]=\der_{t_n}\tilde L \in\calE(0)$, $B_n$ and
 $\tilde L$ are $n$-commutative: $B_n\semicomm{n}\tilde L$. Therefore
 \propref{prop:semicommutative} (iii) implies
\begin{equation}
    \der \varphi_{n,k}=0 \quad \text{for $1\leqq k\leqq n$,}
\label{phink:const}
\end{equation}
 from which it follows that $[\varphi_{n,k},\tilde L]=0$ for
 $1\leqq k \leqq n$. Let us define an operator $B_n^c$ by
\begin{equation}
    B_n^c
    := B_n - \sum_{k=0}^n \varphi_{n,k} \tilde L^n
    = - \sum_{k=0}^n \varphi_{n,k} (\tilde L^n)_-.
\label{Bnc:kp}
\end{equation}
 Replacing $B_n$ in the Lax equation $[\der_{t_n}-B_n,\tilde L]=0$ by
 $B_n^c+\sum_{k=0}^n\varphi_{n,k}\tilde L^k$, we obtain
\[
   \left[\frac{\der}{\der t_n} - B_n^c-\varphi_{n,0}, \tilde L\right]
   = 0,
\]
 because of \eqref{phink:const}.
 Using these Lax equations and \eqref{phink:const}, we have
\begin{equation}
 \begin{split}
    0 &=
    \left[
     \frac{\der}{\der t_n} - B_n,
     \frac{\der}{\der t_m} - B_m
    \right]
\\
    &=
    \left[
     \frac{\der}{\der t_n} - B_n^c - \varphi_{n,0},
     \frac{\der}{\der t_m} - B_m^c - \varphi_{m,0}
    \right]
    +
    \sum_{k\geq1}
    \left(
     \frac{\der\varphi_{n,k}}{\der t_m}
     -
     \frac{\der\varphi_{m,k}}{\der t_n}
    \right) \tilde L^k
\\
    &=
    \sum_{k\geq0}
    \left(
     \frac{\der\varphi_{n,k}}{\der t_m}
     -
     \frac{\der\varphi_{m,k}}{\der t_n}
    \right) \tilde L^k
    + (\text{a negative order operator}).
 \end{split}
\label{sum(dphi/dt-dphi/dt)Lk:kp}
\end{equation}
 Here we define $\varphi_{n,k}=0$ for $k>n$. Because of this convention, 
\begin{equation}
     \frac{\der\varphi_{n,k}}{\der t_m}
     =
     \frac{\der\varphi_{m,k}}{\der t_n},
\label{dphink/dtm=dphimk/dtn:kp}
\end{equation}
 holds trivially for sufficiently large $k$. Starting from this large
 $k$ and proceeding downwards by \eqref{sum(dphi/dt-dphi/dt)Lk:kp}, we
 can show inductively that \eqref{dphink/dtm=dphimk/dtn:kp} is true for
 arbitrary $k\geqq0$.
 This is the compatibility condition of the system of differential
 equations for $f_k(t)$,
\begin{equation}
    \frac{\der f_k}{\der t_n} = \varphi_{n,k}, \qquad
    n\geqq 1.
\label{dfk/dtn=phink:kp}
\end{equation}
 As $\varphi_{n,n}=1$ and $\varphi_{n,k}=0$ for $k>n$, the solution of
 equation \eqref{dfk/dtn=phink:kp} for $k\geqq1$ with the initial
 condition $f_k(t=0)=0$ has the form
\begin{equation}
    f_k(t) = t_k + (\text{a function of }t_{k+1},t_{k+2},\dotsc)
    \qquad
    (k\geqq1)
\label{fk(t)=tk+higher:kp}
\end{equation}
 and they do not depend on $x$ because of \eqref{phink:const}.

 Now we introduce a coordinate change
 \eqref{coord-change:general-zs->kp} by
\begin{equation}
    \tilde t_n := f_n(t).
\label{tildetn=fn(t):kp}
\end{equation}
 The operator $L := e^{-f_0}\tilde L e^{f_0}$ satisfies an equation
\[
 \begin{split}
    \frac{\der L}{\der t_n} &=
    e^{-f_0} \left(
     \frac{\der \tilde L}{\der t_n}
     -
     \left[\frac{\der f_0}{\der t_n},\tilde L\right]
    \right) e^{f_0}
\\
    &=
    e^{-f_0} \left(
     \left[\sum_{k=0}^n \varphi_{n,k}\tilde B_k, \tilde L \right]
     -
     \left[\frac{\der f_0}{\der t_n},\tilde L\right]
    \right) e^{f_0}
\\
    &=
    \sum_{k=1}^n \frac{\der \tilde t_k}{\der t_n}
    e^{-f_0} [\tilde B_k, \tilde L] e^{f_0}
    =
    \sum_{k=1}^n \frac{\der \tilde t_k}{\der t_n}
    [e^{-f_0} \tilde B_k e^{f_0}, L],
 \end{split}
\]
 because of \eqref{Bn->tildeBn:temp}, \eqref{tildetn=fn(t):kp},
 \eqref{dfk/dtn=phink:kp} ($k=0$) and \eqref{phink:const}. Note that
 $e^{-f_0}\tilde B_k e^{f_0} = (L^k)_+$ because
 $
    (e^{-f_0}P e^{f_0})_+
    =
    (e^{-f_0}P_+ e^{f_0})_+ + (e^{-f_0}P_- e^{f_0})_+
    =
    e^{-f_0}P_+ e^{f_0}
 $
 for any $P\in\calE$. Hence, we have
\[
    \frac{\der L}{\der t_n}
    =
    \sum_{k=1}^n \frac{\der \tilde t_k}{\der t_n} [(L^k)_+, L].
\]
 On the other hand, we have
\[
    \frac{\der L}{\der t_n}
    =
    \sum_{k=1}^n \frac{\der \tilde t_k}{\der t_n}
    \frac{\der L}{\der \tilde t_k},
\]
 as $\der \tilde t_k/\der t_n=0$ if $k>n$ by \eqref{fk(t)=tk+higher:kp}
 and \eqref{tildetn=fn(t):kp}. Thus we have obtained equations
\[
    \sum_{k=1}^n \frac{\der \tilde t_k}{\der t_n}
    \frac{\der L}{\der \tilde t_k}
    =
    \sum_{k=1}^n \frac{\der \tilde t_k}{\der t_n} [(L^k)_+, L].
\]
 By applying the inverse matrix of the Jacobian matrix $(\der \tilde
 t_n/t_m)_{n,m=1,2,\dotsc}$ to this system, we have the Lax equations
 \eqref{kp-h:general} for $L$ with time variables $(\tilde t_1,\tilde
 t_2,\dotsc)$.

 The relations \eqref{Bn->tildeBn} of $B_n$ and $\tilde B_n$ follow from
 \eqref{Bn->tildeBn:temp}.

\bigskip
 If $b_{n,n-1}=0$ for all $n$, $\tilde L$ is of the form $\tilde L =
 \der + u_{-1}\der^{-1}+\dotsb$ as we showed in
 \propref{prop:general-zs->lax:kp}. Therefore in the above proof we have
 $[B_n, \tilde L]\in\calE(-1)$ and $B_n\semicomm{n+1}\tilde
 L$. Therefore we have $\der \varphi_{n,k}=0$ for $0\leqq k\leqq n$
 instead of $1\leqq k\leqq n$ in \eqref{phink:const}. Hence $f_0$ does
 not depend on $x$ as other $f_k$'s ($k\geqq1$) and
 $
    L=e^{-f_0}\tilde L e^{f_0} = \tilde L
 $,
 which is of the form \eqref{lax-op:kp} of the Lax operator of the KP
 hierarchy. Therefore $L$ constructed above gives a solution of the KP
 hierarchy.

 This completes the proof of \thmref{thm:general-zs->lax:kp}.
\end{proof}

\begin{rem}
\label{rem:general-zs->lax:kp:another-proof}
 This theorem can be proved using the wave operator $\hat
 W=1+w_1\der^{-1}+w_2\der^{-2}+\dotsb$ which expresses $L$ as $L=\hat
 W\der\hat W^{-1}$. The proof is parallel to that of
 \thmref{thm:general-zs->lax:mkp}.
\end{rem}

\section{Zakharov-Shabat equations for the modified KP hierarchy}
\label{sec:general-zs->lax:mKP}

\subsection{Review of the mKP hierarchy}
\label{subsec:review:mkp}

The modified KP hierarchy (or the mKP hierarchy for short) has two
formulations; one is by microdifferential operators (a series of the KP
hierarchies with compatibility conditions) and the other is by
$\Integer\times\Integer$ matrices. They are equivalent as shown by
\cite{dic:99}. Here we consider the {\em matrix mKP
hierarchy\/}.\footnote{Dickey calls the matrix formulation ``the
discrete KP hierarchy''. But usually the name ``discrete KP'' is used
for a system with discrete independent variables. So here we call the
system the matrix mKP hierarchy.}

Let us take a $\Integer\times\Integer$ matrix $\boldL$ of the form
\begin{equation}
    \boldL =
    \sum_{j=0}^\infty \diag_s[u_{1-j}(s)] \Lambda^{1-j},
\label{def:L:mkp}
\end{equation}
whose elements $u_j(s)$ belong to $\calC:=\calA[[t_1,t_2,\dotsc]]$. Here
$\Lambda$ is the shift matrix $\Lambda := (\delta_{\mu+1,\nu})_{\mu,\nu
\in \Integer}$ and $\diag_s[a(s)]$ is the diagonal matrix with the
$s$-th diagonal element $a(s)$,
\begin{equation}
    \diag_s[a(s)]:=
    \begin{pmatrix}
    \ddots &       &       &      &      &      &       \\
           & a(-2) &       &      &      &      &       \\
           &       & a(-1) &      &      &      &       \\
           &       &       & a(0) &      &      &       \\
           &       &       &      & a(1) &      &       \\
           &       &       &      &      & a(2) &       \\
           &       &       &      &      &      & \ddots \\
    \end{pmatrix}.
\label{def:diag}
\end{equation}
The following commutation relation of the matrix $\Lambda$ and a
diagonal matrix is fundamental in the following:
\begin{equation}
    \Lambda \diag_s[a(s)] = \diag_s[a(s+1)] \Lambda.
\label{Lambda=shift}
\end{equation}
In other words, we can interpret $\Lambda$ as a difference operator
$\Lambda=e^{\der_s}$, $e^{\der_s}\bigl(a(s)\bigr)=a(s+1)$. Taking this
interpretation implicitly, hereafter we do not write ``$\diag_s$''.

\medskip
We define $\boldB_n$ ($n=1,2,3,\dotsc$) from $\boldL$ by
\begin{equation}
    \boldB_n := (\boldL^n)_{+},
\label{def:boldB:mkp:matrix} 
\end{equation}
where the suffix $+$ stands for the ``upper triangular part'': for a
$\Integer\times\Integer$ matrix
$A=\sum_{s\in\Integer}a_j(s) \Lambda^j$, we denote the upper
triangular and the strictly lower triangular part by
\begin{equation}
    A_+ := \sum_{j\geq 0}a_j(s) \Lambda^j, \quad
    A_- := \sum_{j< 0}a_j(s) \Lambda^j.
\label{def:upper/lower-part}
\end{equation} 

\begin{defn}
\label{defn:mkp-h:matrix}
 The {\em matrix representation of the modified KP hierarchy}, or the
 {\em matrix mKP hierarchy} for short, is the Lax equations for
 $\boldL$,
\begin{equation}
    \frac{\der\boldL}{\der t_n} = [\boldB_n, \boldL], \qquad
    (n=1,2,3,\dotsc)
\label{mkp:matrix}
\end{equation}
 for a matrix function $\boldL$ of the form \eqref{def:L:mkp}.
\end{defn}

The Zakharov-Shabat equations for this system is
\begin{equation}
    \left[
     \frac{\der}{\der t_m}-\boldB_m, \frac{\der}{\der t_n}-\boldB_n
    \right]
    =
    \frac{\der \boldB_m}{\der t_n} - \frac{\der \boldB_n}{\der t_m}
    +[\boldB_m,\boldB_n]
    = 0, \quad
    m,n=1,2,\dotsc.
\label{zs:mkp}
\end{equation}
The whole system of these equations \eqref{zs:mkp} is equivalent to the
Lax representation \eqref{mkp:matrix}, which can be proved exactly in
the same way as the corresponding statement for the KP hierarchy in
\secref{subsec:review:kp}.

\subsection{Construction of the Lax operator of the mKP hierarchy}
\label{subsec:construction-L:mkp}

In this section we prove that the Zakharov-Shabat equations of certain
infinite matrices imply the existence of a matrix $\boldL$, which
satisfies the Lax equations.

The main strategy of the construction of $\boldL$ is similar to that for
the KP hierarchy in \secref{sec:general-zs->lax:kp}, but, since the
commutation relation \eqref{Lambda=shift} replaces the commutation
relation $[\der, f]=f'$ for the (micro)differential operators, we cannot
use the estimate \eqref{ord[A,B]} of orders.

\bigskip
We start from the $B$-operators, 
\begin{equation}
    \boldB_n 
    = \sum_{j=0}^n b^{(n)}_j(s) \Lambda^j
    \qquad
    (b^{(n)}_n(s) = 1),
\label{general-B-op:mkp}
\end{equation}
without a Lax operator. The elements $b^{(n)}_j(s)$ depend on
$t=(t_1,t_2,\dotsc)$ besides the discrete variable $s\in\Integer$.

First we prove the existence of the $\boldL$-operator satisfying the Lax
equations.

\begin{prop}
\label{prop:general-zs->lax:mkp}
 Let $\{\boldB_n\}_{n=1,2,\dotsc}$ be a sequence of matrices of the form
 \eqref{general-B-op:mkp}. If this sequence satisfies the system of
 Zakharov-Shabat equations,
\begin{equation}
    \left[
     \frac{\der}{\der t_m}-\boldB_m, \frac{\der}{\der t_n}-\boldB_n
    \right]
    =
    \frac{\der \boldB_m}{\der t_n} - \frac{\der \boldB_n}{\der t_m}
    +[\boldB_m,\boldB_n]
    = 0,
\label{general-zs:mkp}
\end{equation}
 for $m,n = 1,2,\dotsc$, then there exists a matrix of the form,
\begin{equation}
    \boldL 
    = \sum_{j=0}^\infty u_{1-j}(s) \Lambda^{1-j},
    \qquad  (u_1(s) = 1),
\label{general-L:mkp}
\end{equation}
 which satisfies the Lax equations
\begin{equation}
    \left[\frac{\der}{\der t_n}-\boldB_n,\boldL\right]
    =
    \frac{\der \boldL}{\der t_n} - [\boldB_n,\boldL]
    = 0.
\label{general-lax:mkp}
\end{equation}
 for $m,n=1,2,\dotsc$.
\end{prop}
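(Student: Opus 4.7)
The plan is to emulate the proof of \propref{prop:general-zs->lax:kp}, constructing $\boldL$ as a limit of monic matrices $\boldL_r$ satisfying the Lax equation modulo operators of order $\le -r$ in $\Lambda$. The key structural observation is that, although the Leibniz-type estimate \eqref{ord[A,B]} fails in the matrix setting, monicity rescues it: for any two matrices of orders $m$ and $n$ whose leading coefficients are the scalar $1$, the leading $\Lambda^{m+n}$ term of their commutator cancels, so the commutator has order $\le m+n-1$. From this, $\ord(\der_{t_n}\boldB_m-\der_{t_m}\boldB_n)\le m-1$, and the Zakharov-Shabat equation \eqref{general-zs:mkp} yields $\ord([\boldB_n,\boldB_m])\le m-1$ for $n\le m$. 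First I would establish matrix analogues of \propref{prop:semicommutative} and \propref{prop:semicomm-Bn} under this monicity hypothesis, with the role of ``$\der a=0$'' played by ``$a(s+1)=a(s)$''. Applying the matrix version of \propref{prop:semicomm-Bn} with $r=0$ then produces a monic $\boldL_0=\Lambda+u_0(s)+u_{-1}(s)\Lambda^{-1}+\dotsb$ with $\ord([\boldB_n,\boldL_0])\le 0$ for every $n$.

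I would then build $\boldL_r$ inductively for $r\ge 0$, so that
\begin{equation*}
    \ord\Bigl(\frac{\der\boldL_r}{\der t_n}-[\boldB_n,\boldL_r]\Bigr)\le -r,
    \qquad
    \ord(\boldL_{r+1}-\boldL_r)\le -r,
\end{equation*}
mirroring \eqref{lax-up-to-E(-r)}--\eqref{Lr+1-Lr-in-E(-r)}. The passage $\boldL_r\mapsto\boldL_{r+1}=\boldL_r-f(s)\boldL_r^{-r}$ uses an $f$ with $f(s+1)=f(s)$, so that $f$ commutes with $\Lambda$ and monicity is preserved. Expanding $\boldB_m=\sum_{k\le m}\varphi_{m,k}(s)\boldL_r^k$, the matrix analogue of \propref{prop:semicommutative} applied to $\ord([\boldB_m,\boldL_r])\le 0$ gives $\varphi_{m,k}(s+1)=\varphi_{m,k}(s)$ for $1\le k\le m$. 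Writing $g_n(s)\Lambda^{-r}+\dotsb$ for the leading part of $\der_{t_n}\boldL_r-[\boldB_n,\boldL_r]$, the argument leading to \eqref{dvarphimk/dtn} yields $\der_{t_n}\varphi_{m,m-1-r}=-m\,g_n$ for $m>n+r$. Taking the shift-difference $a(s)\mapsto a(s+1)-a(s)$ of this relation, and using that $\varphi_{m,m-1-r}$ is shift-invariant whenever $m-1-r\ge 1$ (which holds for $m>n+r+1$), produces $g_n(s+1)=g_n(s)$. Integration in $t_n$ then defines $f_m(t,s)$, independent of $m$ in the limit $m\to\infty$, and $\boldL:=\lim_r\boldL_r$ satisfies \eqref{general-lax:mkp}.

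The main obstacle is precisely the failure of the Leibniz estimate \eqref{ord[A,B]}: since $[\Lambda^n,a(s)]=(a(s+n)-a(s))\Lambda^n$ does not drop in order, one cannot blindly transcribe \secref{subsec:semicomm}. Everything relies on monicity in the strict sense that the leading coefficient is the scalar $1$: this holds for $\boldB_n$, for $\boldL_0$, and inductively for each $\boldL_r$, and so commutators involving them do drop by one. The delicate point is to keep this property stable through the induction, in particular through each expansion $\boldB_m=\sum\varphi_{m,k}\boldL_r^k$ (where $\varphi_{m,m}=1$ is scalar but lower $\varphi_{m,k}$ are $s$-dependent in general), so that the order estimates on which the whole construction rests remain valid. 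Once the matrix versions of \propref{prop:semicommutative} and \propref{prop:semicomm-Bn} are set up under this monicity hypothesis, the remainder is a formal transcription of the proof of \propref{prop:general-zs->lax:kp} with $\der_x$ replaced throughout by the shift difference $a(s)\mapsto a(s+1)-a(s)$.
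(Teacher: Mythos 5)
There is a genuine gap, and it sits exactly where the paper warns it will: your claim that monicity rescues the order estimate \eqref{ord[A,B]} is true only for commutators of two operators whose \emph{leading} coefficients are scalar, whereas the arguments of \secref{subsec:semicomm} run on commutators of a different kind. The engine of \propref{prop:semicommutative} (i)$\Rightarrow$(iii) is the computation $[P,a]=m(\der a)\der^{m-1}+\dotsb$ for a non-constant coefficient $a$; its matrix analogue is $[\boldL^m,a(s)]=(a(s+m)-a(s))\Lambda^m+\dotsb$, which does \emph{not} drop in order and, more importantly, only detects whether $a$ is \emph{$m$-periodic}, not whether $a(s+1)=a(s)$. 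Consequently the hypothesis $[\boldB_n,\boldB_m]\in\Mat^{m-1}$ (two operators of orders $n$ and $m$, neither equal to $1$) does not force constancy of the expansion coefficients of $\boldB_m$, and the $n$-th-root/approximation argument of \propref{prop:semicomm-Bn} that you invoke to produce $\boldL_0$ does not transcribe. This is precisely why the paper replaces that argument by \lemref{lem:construction-L0:mkp}: there one must play the conditions $[\boldB_m,\boldB_{r+1}]\in\Mat^{r}$ for \emph{several} values of $m$ against each other to upgrade $m$-periodicity to $s$-independence, and even then the coefficient $a^{(r)}_{r+1,1}(s)$ comes out only $r$-periodic and $a^{(r)}_{r+1,0}(s)$ is not controlled at all; both must be absorbed by the correction terms $a^{(r)}_{r+1,1}(s)\boldL_{0,r}^{1-r}+b(s)\boldL_{0,r}^{-r}$ of \eqref{L0r+1=L0r+lower}. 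Your proposal contains no mechanism for any of this, so the construction of $\boldL_0$ --- the foundation of the whole induction --- is missing.

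A secondary, patchable issue occurs in your induction step $\boldL_r\mapsto\boldL_{r+1}$. You derive $\der_{t_n}\varphi_{m,m-1-r}=-m\,g_n$ and then deduce $g_n(s+1)=g_n(s)$; but the derivation already uses $[\der_{t_n}-\boldB_n,\boldL_r^k]=kg_n\Lambda^{k-1-r}+\dotsb$, whose matrix form is $\bigl(\sum_{j}g_n(s+j)\bigr)\Lambda^{k-1-r}+\dotsb$ unless $g_n$ is \emph{already} known to be independent of $s$. The paper avoids this circularity in \lemref{lem:construction-Lr:mkp} by first proving $g_n(s+m)=g_n(s)$ for all $m>n$ via the Jacobi identity applied to the Zakharov--Shabat equations, and only afterwards running the correction-term argument. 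Your ordering could be repaired (one would obtain $m$-periodicity of $g_n$ for all large $m$, hence constancy), but as written the steps are in the wrong order.
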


As in the case of the KP hierarchy in the previous section, we construct
$L$-operators inductively. To measure the degree of approximation, we
often use the following sets of infinite matrices:
\begin{equation}
    \Mat^{n}
    := 
    \left\{\left.
      A = \sum_{k\leq n} a_k(s) \Lambda^k 
      \,\right|\, a_k(s)\in\calR
    \right\}.
\label{lower-matrices}
\end{equation}

\bigskip
The first approximation of $\boldL$ is obtained by the following lemma.

\begin{lem}
\label{lem:construction-L0:mkp}
 Under the assumption of \propref{prop:general-zs->lax:mkp} there exists
 $\boldL_0$ of the form \eqref{general-L:mkp} such that 
\begin{equation}
    [\boldB_n,\boldL_0] \in \Mat^0,
\label{[Bn,L0]inMat0}
\end{equation}
\end{lem}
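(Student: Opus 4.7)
My plan is to adapt the proof of \propref{prop:semicomm-Bn} to the matrix setting. First, I would establish a matrix analogue of \propref{prop:semicommutative}: given a monic matrix $P = \Lambda^m + p_{m-1}(s)\Lambda^{m-1} + \dotsb \in\Mat^m$ and a monic first-order root $L$ with $L^m = P$, relate the three conditions $[P,Q]\in\Mat^{m+n-1-r}$, $[L,Q]\in\Mat^{n-r}$, and ``the leading coefficients in the expansion $Q = \sum_{k\le n} a_k(s)L^k$ are $s$-independent''. The replacement for the Leibniz identity $[\der,a] = a'$ is the shift identity $[L,a(s)] = (a(s+1)-a(s))\Lambda + (\text{lower order})$, which I would use to determine how many leading coefficients must be $s$-constant in order to bound the order of $[L,Q]$.

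From the Zakharov-Shabat equations \eqref{general-zs:mkp}, we have $[\boldB_n,\boldB_m] = \der_{t_m}\boldB_n - \der_{t_n}\boldB_m$. Since the $\boldB_n$ are monic and $t$-differentiation kills their constant leading coefficients, this lies in $\Mat^{m-1}$ for $n\le m$, giving the required ``$0$-commutativity'' of the family $\{\boldB_n\}$.

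Following the KP strategy, I would then fix a monic first-order $n$-th root $\boldL_n$ of each $\boldB_n$. Solving $\boldL_n^n=\boldB_n$ coefficient-by-coefficient reduces to linear difference equations of the form $(1 + \Lambda + \dotsb + \Lambda^{n-1})u(s) = (\text{known})$, which admit solutions (any choice suffices for existence). The matrix semicommutativity of the first step then yields semicommutativity of $\boldL_n$ and $\boldL_m$ for $n\le m$. A matrix analogue of \lemref{lem:construction-tildeLn} lets one inductively build $\tilde\boldL_n$ with $\tilde\boldL_n - \tilde\boldL_{n-1}$ of suitably decreasing order, and then $\boldL_0 := \lim_{n\to\infty}\tilde\boldL_n$, defined in the filtration topology, will satisfy $[\boldB_n,\boldL_0]\in\Mat^0$ for every $n$, by splitting $[\boldB_n,\boldL_0] = [\boldB_n,\tilde\boldL_N] + [\boldB_n,\boldL_0-\tilde\boldL_N]$ for $N$ large.

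The main obstacle is the matrix analogue of \propref{prop:semicommutative}. Whereas $[\der,a]$ has order strictly less than that of $\der$, the commutator $[\Lambda,a(s)] = (a(s+1)-a(s))\Lambda$ has the \emph{same} order as $\Lambda$. This off-by-one shift must be tracked carefully throughout the inductive construction, and it forces a different indexing convention in the characterization of semicommutativity than in the KP case. Verifying that each correction step genuinely lowers the order of the commutator, rather than merely keeping it bounded, is the delicate technical point, since one cannot simply invoke the estimate $\ord[\Lambda^m,a]\le m-1$ that was available for microdifferential operators.
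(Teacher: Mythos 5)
There is a genuine gap, and it sits exactly where you point but deeper than you suggest. Your plan hinges on a matrix analogue of \propref{prop:semicommutative} in which the three conditions remain equivalent, with ``$s$-independent leading coefficients'' as condition (iii). But in the matrix setting the implication from ``$[P,Q]$ has low order'' to ``leading coefficients are $s$-independent'' is simply false: writing $Q=\sum_{k\le n}a_k(s)L^k$ with $L^m=P$, one has $[P,a_k(s)]=(a_k(s+m)-a_k(s))\Lambda^m+\dotsb$, so a bound on $\ord([P,Q])$ only forces the leading $a_k$ to be \emph{$m$-periodic}, not constant. Consequently the hypothesis $[\boldB_n,\boldB_m]\in\Mat^{m-1}$ for a single pair $(n,m)$ cannot yield the semicommutativity of the roots $\boldL_n,\boldL_m$ that your induction needs, and no amount of ``careful tracking of the off-by-one shift'' repairs this: the kernel of the difference operator $a(s)\mapsto a(s+m)-a(s)$ genuinely contains non-constant functions. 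The missing ideas are (a) to exploit the commutator bounds for \emph{several} values of $m$ simultaneously, so that coprime periods (e.g.\ $m=2$ and $m=3$) force constancy of most coefficients, and (b) to accept that one coefficient (the one at level $k=1$) can only be shown to be $r$-periodic, and to absorb it by a correction term whose relevant contribution is a sum of that coefficient over a full period --- which \emph{is} constant --- together with a further term $b(s)\boldL_{0,r}^{-r}$ obtained by solving a difference equation $\sum_{k=0}^{r}b(s+k)=a(s)$. Without (a) and (b) the inductive step does not close.

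For comparison, the paper does not take $n$-th roots of the $\boldB_n$ at all. It runs a direct induction: $\boldL_{0,1}:=\boldB_1$, and given $\boldL_{0,r}$ with $[\boldB_n,\boldL_{0,r}]\in\Mat^0$ for $n\le r$, it expands $\boldB_{r+1}=\sum_{k\le r+1}a^{(r)}_{r+1,k}(s)\boldL_{0,r}^k$, uses $[\boldB_m,\boldB_{r+1}]\in\Mat^{r}$ for all $m=1,\dots,r$ to conclude that $a^{(r)}_{r+1,k}$ is $s$-independent for $k\ge 2$ while $a^{(r)}_{r+1,1}$ is only $r$-periodic, and then sets $\boldL_{0,r+1}=\boldL_{0,r}+a^{(r)}_{r+1,1}(s)\boldL_{0,r}^{1-r}+b(s)\boldL_{0,r}^{-r}$. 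Your outline, by contrast, reproduces the KP-case architecture (\propref{prop:semicomm-Bn} plus \lemref{lem:construction-tildeLn}) and defers precisely the step at which that architecture breaks down; as written it is therefore not a proof.
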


\begin{proof}
 We construct a matrix $\boldL_0$ of the form \eqref{general-L:mkp}
 satisfying \eqref{[Bn,L0]inMat0} inductively. 

 Suppose that we have $\boldL_{0,r}$ of the form \eqref{general-L:mkp}
 satisfying
\begin{equation}
    [\boldB_n,\boldL_{0,r}]\in\Mat^0
\label{[Bn,L0r]inMat0}
\end{equation}
 for $n=1,2,\dotsc,r$. It is trivial that we can take $\boldB_1$ as
 $\boldL_{0,1}$. 

 Note that we can always expand $\boldB_n$ as 
\begin{equation}
    \boldB_n 
    =
    \sum_{k\leq n} a^{(r)}_{n,k}(s) \boldL_{0,r}^k, \qquad
    a^{(r)}_{n,n}(s)=1,
\label{Bn=series-L0r}
\end{equation}
 because of the form of $\boldL_{0,r}$, \eqref{general-L:mkp}. The
 condition \eqref{[Bn,L0r]inMat0} means that the coefficients
 $a^{(r)}_{n,k}(s)$ ($k=0,\dotsc,n$) do not depend on $s$. In fact, this
 follows from the equivalence of $[a(s),\Lambda]=0$ and $a(s+1)=a(s)$
 for all $s$.

 Therefore under our induction assumption $a^{(r)}_{n,k}(s)$
 ($n=1,\dotsc,r$, $k=0,\dotsc,n$) do not depend on $s$. Moreover, the
 coefficients $a^{(r)}_{r+1,k}(s)$ of the expansion
 \eqref{Bn=series-L0r} of $\boldB_{r+1}$ do not depend on $s$ for
 $k=3,\dotsc,r+1$. This can be shown from \eqref{general-zs:mkp} as
 follows. Using the expansion \eqref{Bn=series-L0r}, we have
\begin{equation}
 \begin{split}
    [\boldB_m,\boldB_{r+1}]
    &\equiv
    \sum_{l\leq r+1}
    \sum_{k=0}^m
      a^{(r)}_{m,k} [\boldL_{0,r}^k, a^{(r)}_{r+1,l}(s)] \boldL_{0,r}^l
    \mod\Mat^r
\\
    &=
    \sum_{l\leq r}
    \sum_{k=1}^m
     a^{(r)}_{m,k} [\boldL_{0,r}^k, a^{(r)}_{r+1,l}(s)] \boldL_{0,r}^l.
 \end{split}
\label{[Bm,Br+1]}
\end{equation}
 for $m=1,\dotsc,r$. (Note that a term
$
    a^{(r)}_{m,k} [\boldL_{0,r}^k, a^{(r)}_{r+1,l}(s)] \boldL_{0,r}^l
$
 with $k<0$ and $l\leqq r+1$ belongs to $\Mat^r$ and that
$a^{(r)}_{r+1,r+1}(s)$ does not depend on $s$, which implies that the
commutator $[\boldL_{0,r}^k,a^{(r)}_{r+1,r+1}]$ always vanishes.)  The
expression \eqref{[Bm,Br+1]} should be zero modulo $\Mat^r$, because the
Zakharov-Shabat equation \eqref{general-zs:mkp} implies
 $
    [\boldB_m,\boldB_{r+1}]
    =
    \der_{t_m}\boldB_{r+1}-\der_{t_{r+1}}\boldB_m\in\Mat^{r}
 $.

 The top degree term in the expansion of \eqref{[Bm,Br+1]} as a series
 of $\Lambda$ would come from the term with $(k,l)=(m,r)$ (if it is not
 zero), which is equal to
 $
   [\boldL_{0,r}^m, a^{(r)}_{r+1,r}(s)] \boldL_{0,r}^{r}
   =
   (a^{(r)}_{r+1,r}(s+m) - a^{(r)}_{r+1,r}(s))\Lambda^{r+m}+\dotsb
$.
 Hence, if we take $m=1$, the condition
 $[\boldB_m,\boldB_{r+1}]\in\Mat^{r}$ implies
 $a^{(r)}_{r+1,r}(s+1)=a^{(r)}_{r+1,r}(s)$, namely, $a^{(r)}_{r+1,r}(s)$
 does not depend on $s$.

 Therefore there is no contribution from the term $l=r$ in
 \eqref{[Bm,Br+1]} and the top degree term in the $\Lambda$-expansion
 would come from the term with $(k,l)=(m,r-1)$ (if it is not zero),
 which is equal to
 $
   [\boldL_{0,r}^m, a^{(r)}_{r+1,r-1}(s)] \boldL_{0,r}^{r-1}
   =
   (a^{(r)}_{r+1,r-1}(s+m) - a^{(r)}_{r+1,r-1}(s))\Lambda^{r-1+m}
   + \dotsb
 $.
 Hence, if $2\leqq m \leqq r$, the condition
 $[\boldB_m,\boldB_{r+1}]\in\Mat^{r}$ implies
 $a^{(r)}_{r+1,r-1}(s+m)=a^{(r)}_{r+1,r-1}(s)$. Moreover, if $r\geqq3$,
 $a^{(r)}_{r+1,r-1}(s+3)=a^{(r)}_{r+1,r-1}(s+2)=a^{(r)}_{r+1,r-1}(s)$,
 which means that $a^{(r)}_{r+1,r-1}(s)$ does not depend on $s$.

 Repeating the same argument inductively on $l$ downwards, we can show
 the following for $r\geqq1$.
\begin{itemize}
 \item $a^{(r)}_{r+1,r}$ is independent of $s$.
 \item If $r\geqq3$, $a^{(r)}_{r+1,r-1}$, ..., $a^{(r)}_{r+1,2}$ are
       independent of $s$.
 \item If $r\geqq2$, $a^{(r)}_{r+1,1}(s)$ is periodic:
       $a^{(r)}_{r+1,1}(s+r)=a^{(r)}_{r+1,1}(s)$. (If $r=1$, this
       holds because $a^{(r)}_{r+1,r}$ is independent of $s$.)
\end{itemize}

 Now we define $\boldL_{0,r+1}$ by modifying $\boldL_{0,r}$:
\begin{equation}
    \boldL_{0,r+1}
    :=
    \boldL_{0,r}
     + a^{(r)}_{r+1,1}(s) \boldL_{0,r}^{1-r} + b(s) \boldL_{0,r}^{-r},
\label{L0r+1=L0r+lower}
\end{equation}
 where $b(s)$ will be determined later. It is easy to see that
 $\boldL_{0,r+1}^k\equiv \boldL_{0,r}^k \mod\Mat^{-1}$ for $r\geqq2$,
 $k<r$. The $r$-th power of $\boldL_{0,r+1}$ is
\[
    \boldL_{0,r+1}^r
    \equiv
    \boldL_{0,r}^r + c_1^{(r)} \mod\Mat^{-1},
\]
 where $c_1^{(r)} = \sum_{k=0}^{r-1} a^{(r)}_{r+1,1}(s+k)$, which does
 not depend on $s$ because of the periodicity of
 $a^{(r)}_{r+1,1}(s)$. The $(r+1)$-st power is
\begin{multline*}
    \boldL_{0,r+1}^{r+1}
    \equiv
    \boldL_{0,r}^{r+1} + (c_1^{(r)}+a^{(r)}_{r+1,1}(s)) \boldL_{0,r}
\\
    +
    \left(
     (\text{terms determined by }\boldL_{0,r})
     +\sum_{k=0}^{r}b(s+k)
    \right)
    \mod\Mat^{-1}.
\end{multline*}
 Hence, if we choose $b(s)$ appropriately, we can make the last term
 $(\dotsb)$ in the above expression equal to an arbitrary sequence
 $a(s)$. (For example, first set $b(0)=b(1)=\dotsb=b(r-1)=0$ and
 determine $b(r), b(r+1), \dotsc$ recursively from $a(0), a(1),
 \dotsc$. The coefficients $b(-1), b(-2), \dotsc$ are determined from
 $a(-1), a(-2), \dotsc$.)

 Summarising, by appropriate choices of $b(s)$ in
 \eqref{L0r+1=L0r+lower} we can construct $\boldL_{0,r+1}$ whose powers
 are:
\begin{equation*}
     \boldL_{0,r+1}^k 
     \equiv
     \begin{cases}
     \boldL_{0,r}^k, & k<r, \\
     \boldL_{0,r}^r + c_1^{(r)}, &k=r, \\
     \boldL_{0,r}^{r+1}
      + (c_1^{(r)}+a^{(r)}_{r+1,1}(s))\boldL_{0,r}
      + a^{(r)}_{r+1,0}(s),
     &k=r+1,
     \end{cases}
\end{equation*}
 modulo $\Mat^{-1}$. Reversing these relations, we can rewrite
 powers of $\boldL_{0,r}$ by those of $\boldL_{0,r+1}$:
\begin{equation}
     \boldL_{0,r}^k 
     \equiv
     \begin{cases}
     \boldL_{0,r+1}^k, & k<r, \\
     \boldL_{0,r+1}^r - c_1^{(r)}, &k=r, \\
     \boldL_{0,r+1}^{r+1}
      - (c_1^{(r)}+a^{(r)}_{r+1,1}(s))\boldL_{0,r+1}
      - a^{(r)}_{r+1,0}(s),
     &r\geqq2, k=r+1, \\
     \boldL_{0,2}^2 - (c_1^{(1)}+a^{(1)}_{2,1}(s))\boldL_{0,2}
      + 2(c_1^{(1)})^2-a^{(1)}_{2,0}(s),
     &r=1, k=2,
     \end{cases}
\label{powers-of-L0r}
\end{equation}
 modulo $\Mat^{-1}$. Substituting them in the expansion
 \eqref{Bn=series-L0r} of $B_n$ ($n=1,\dotsc,r,r+1$), we obtain the
 expansion 
\begin{equation*}
    \boldB_n 
    =
    \sum_{k\leq n} a^{(r+1)}_{n,k}(s) \boldL_{0,r+1}^k, \qquad
    a^{(r+1)}_{n,n}(s)=1,
\end{equation*}
 where the coefficients $a^{(r+1)}_{n,k}(s)$ do not depend on $s$ if
 $n=1,\dotsc,r,r+1$ and $k=0,\dotsc,n$, which is the desired property of
 $\boldL_{0,r+1}$. 

 By construction \eqref{L0r+1=L0r+lower} the sequence
 $\{\boldL_{0,r}\}_{r=1,2,\dotsc}$ converges to a matrix $\boldL_0$ of
 the form \eqref{general-L:mkp} with respect to the topology induced by
 filtration $\{\Mat^r\}_{r\in\Integer}$. The limit $\boldL_0$ satisfies
 \eqref{[Bn,L0]inMat0}. In other words, $\boldB_n$'s are expanded as
\begin{equation}
    \boldB_n 
    =
    \sum_{k\leq n} a^{[0]}_{n,k}(s) \boldL_{0}^k, \qquad
    a^{[0]}_{n,n}(s)=1,
\label{Bn=series-L0}
\end{equation}
 where the coefficients $a^{[0]}_{n,k}(s)$ ($k=0,\dotsc,n$) do not
 depend on $s$. Construction of $\boldL_0$ is over.
\end{proof}

\lemref{lem:construction-L0:mkp} is the initial step of the induction in
the following lemma.

\begin{lem}
\label{lem:construction-Lr:mkp}
 Take any $r=0,1,\dotsc$. Let $\boldL_r$ be a matrix of the form
 \eqref{general-L:mkp} satisfying
\begin{equation}
    [\der_{t_n}-\boldB_n, \boldL_r]
    \in \Mat^{-r}.
\label{general-lax-r:mkp}
\end{equation}
 Then there exists $\boldL_{r+1}$ satisfying \eqref{general-lax-r:mkp}
 with $r+1$ instead of $r$. The difference of $\boldL_r$ and
 $\boldL_{r+1}$ is in $\Mat^{-r}$.
\end{lem}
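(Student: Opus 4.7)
The plan is to mimic the inductive construction used in the proof of \propref{prop:general-zs->lax:kp}. Define a family $\{g_n(s)\}$ by
\[
 [\der_{t_n}-\boldB_n,\boldL_r] = g_n(s)\,\Lambda^{-r}+(\text{terms in }\Mat^{-r-1}),
\]
and attempt a correction of the form $\boldL_{r+1}:=\boldL_r - f\boldL_r^{-r}$ with $f=f(t)$ constant in $s$; the requirement $\boldL_{r+1}-\boldL_r\in\Mat^{-r}$ is then automatic. A short derivation, using that $D:=\der_{t_n}-[\boldB_n,\cdot]$ is a derivation and that $D(\boldL_r^{-r}) = -\boldL_r^{-r}D(\boldL_r^r)\boldL_r^{-r}\in\Mat^{-2r-1}\subset\Mat^{-r-1}$, shows
\[
 [\der_{t_n}-\boldB_n,\boldL_{r+1}] = (g_n-\der_{t_n}f)\,\Lambda^{-r} + (\text{terms in }\Mat^{-r-1}),
\]
where we used $[\boldB_n,f]=0$ for $f$ constant in $s$. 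The problem therefore reduces to finding a single $f$, constant in $s$, satisfying $\der_{t_n}f=g_n$ simultaneously for every $n$.

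To produce such an $f$, first note that $\der_{t_n}\boldL_r\in\Mat^0$, so \eqref{general-lax-r:mkp} gives $[\boldB_n,\boldL_r]\in\Mat^0$. Expand $\boldB_m=\sum_{k\leq m}\varphi_{m,k}(s)\boldL_r^k$ with $\varphi_{m,m}=1$. Using $[\varphi(s),\Lambda]=(\varphi(s)-\varphi(s+1))\Lambda$ and matching the coefficients of $\Lambda^j$ for $j\geq1$ in $[\boldB_m,\boldL_r]\in\Mat^0$ from the top down, exactly as in \lemref{lem:construction-L0:mkp}, one sees that $\varphi_{m,k}(s)$ is constant in $s$ for every $k\geq0$. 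Now compute $R:=\der_{t_n}\boldB_m-[\boldB_n,\boldB_m]=\der_{t_m}\boldB_n\in\Mat^{n-1}$ by applying $D$ termwise to the expansion of $\boldB_m$. Constancy in $s$ of $\varphi_{m,k}$ for $k\geq0$ kills the brackets $[\boldB_n,\varphi_{m,k}]$ for such $k$, and the derivation rule yields $D(\boldL_r^m)=c_m(s)\Lambda^{m-1-r}+(\text{lower})$ with $c_m(s):=\sum_{j=0}^{m-1}g_n(s+j)$, while $D(\boldL_r^k)\in\Mat^{m-2-r}$ for $k<m$. Choosing $m>n+r+1$ so that $R\in\Mat^{m-2-r}$ and running a descending induction on $k$ starting from $k=m$ gives
\[
 \frac{\der\varphi_{m,k}}{\der t_n}=0\quad(k>m-1-r),
 \qquad
 \frac{\der\varphi_{m,m-1-r}}{\der t_n}=-\sum_{j=0}^{m-1}g_n(s+j).
\]

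The main obstacle, and the main departure from the KP argument, is proving that $g_n(s)$ is independent of $s$: no partial derivative is available to replicate the KP step ``$\der g_n=0$''. The substitute is that for $m\geq r+1$ the index $m-1-r$ is non-negative, so $\varphi_{m,m-1-r}$ is constant in $s$, whence so is its $t_n$-derivative; therefore $c_m(s)=\sum_{j=0}^{m-1}g_n(s+j)$ is constant in $s$. Computing $c_m(s+1)-c_m(s)$ yields $g_n(s+m)=g_n(s)$, so $g_n$ is $m$-periodic for every sufficiently large $m$, and two consecutive such values of $m$ are coprime, forcing $g_n$ to be $1$-periodic, i.e.\ constant in $s$. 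Finally, set
\[
 f_m(t):=-\frac{1}{m}\bigl(\varphi_{m,m-1-r}(t)-\varphi_{m,m-1-r}(t=0)\bigr);
\]
then $\der_{t_n}f_m=g_n$ for all $m>n+r+1$, independently of $m$, so $f_m$ stabilises in the topology induced by the valuation $\val t_n=n$, and its limit $f\in\calA[[t_1,t_2,\dotsc]]$ is constant in $s$, vanishes at $t=0$, and satisfies $\der_{t_n}f=g_n$ for every $n$. Plugging this $f$ into $\boldL_{r+1}:=\boldL_r-f\boldL_r^{-r}$ and invoking the computation of the first paragraph gives $[\der_{t_n}-\boldB_n,\boldL_{r+1}]\in\Mat^{-r-1}$, completing the induction step.
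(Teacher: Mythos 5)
Your proof is correct, and its skeleton coincides with the paper's: define $g_n(s)$ as the leading coefficient of $[\der_{t_n}-\boldB_n,\boldL_r]$, show it is independent of $s$, extract $\der_{t_n}\varphi_{m,m-1-r}=-mg_n$ from the Zakharov--Shabat equations via the expansion of $\boldB_m$ in powers of $\boldL_r$, assemble $f$ as the limit of the $f_m$, and correct by $\boldL_{r+1}=\boldL_r-f\boldL_r^{-r}$, using $[\der_{t_n}-\boldB_n,\boldL_r^{-r}]\in\Mat^{-2r-1}$ exactly as the paper does. The one step where you genuinely diverge is the proof that $g_n$ is independent of $s$. The paper settles this first and separately, from the iterated-bracket identity $[\der_{t_n}-\boldB_n,[\der_{t_m}-\boldB_m,\boldL_r]]=[\der_{t_m}-\boldB_m,[\der_{t_n}-\boldB_n,\boldL_r]]$: reducing modulo $\Mat^{m-r-1}$ leaves $[\boldB_m,g_n(s)\boldL_r^{-r}]\equiv0$, whose top coefficient $g_n(s+m)-g_n(s)$ must vanish for every $m>n$. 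You instead postpone the question, carry the sum $c_m(s)=\sum_{j=0}^{m-1}g_n(s+j)$ through the computation of $\der_{t_n}\varphi_{m,m-1-r}$, and read off that $c_m(s)$ is constant in $s$ because $\varphi_{m,m-1-r}$ is (correctly using $m-1-r\geqq0$ so that the $s$-constancy of the coefficients $\varphi_{m,k}$, $k\geqq0$, applies); then $c_m(s+1)-c_m(s)=g_n(s+m)-g_n(s)$ gives the same periodicity, and coprimality of consecutive periods finishes it in both arguments. Your route saves the separate Jacobi-identity computation, since the periodicity falls out of work needed anyway to define $f_m$, at the price of having to keep track of the $s$-dependence of $c_m(s)$ until the end --- which you do: where the paper, having already established constancy, can write $kg_m\Lambda^{k-1-r}$, you must and do write $\sum_j g_n(s+j)$. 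Both arguments are sound; the remainder of your proof matches the paper's.
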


\begin{proof}
 We define $g_n(s) = g_n(s;t)$ by
\begin{equation}
    [\der_{t_n} - \boldB_n, \boldL_r] \equiv g_n(s)\boldL_r^{-r}
    \mod \Mat^{-r-1}.
\label{general-laxr:def:gn}
\end{equation}
 First we prove that $g_n(s)$ does not depend on $s$. Because of the
 Zakharov-Shabat equations \eqref{general-zs:mkp},
\begin{equation*}
    [\der_{t_n} - \boldB_n, [\der_{t_m} - \boldB_m, \boldL_r]]
    =
    [\der_{t_m} - \boldB_m, [\der_{t_n} - \boldB_n, \boldL_r]].
\end{equation*}
 Hence, if $1\leqq n < m$,
\begin{equation*}
    [\der_{t_n} - \boldB_n, g_m(s)\boldL_r^{-r}]
    \equiv
    [\der_{t_m} - \boldB_m, g_n(s)\boldL_r^{-r}]
    \mod \Mat^{m-r-1},
\end{equation*}
 from which it follows that
\begin{equation*}
    0 \equiv [\boldB_m,g_n(s)\boldL_r^{-r}] \mod\Mat^{m-r-1}.
\end{equation*}
 Since the commutator in the right-hand side is expanded as
 $(g_n(s+m)-g_n(s))\Lambda^{m-r}+\dotsb$, we obtain $g_n(s+m)=g_n(s)$
 for any $m$ greater than $n$. Therefore $g_n(s)=g_n$ does not depend on
 $s$.

 Note that the assumption \eqref{general-lax-r:mkp} implies
 $[\boldB_n,\boldL_r]\in\Mat^0$, which means that the coefficients
 $a^{[r]}_{n,k}(s)$ in the expansion of $\boldB_n$,
\begin{equation}
    \boldB_n 
    =
    \sum_{k\leq n} a^{[r]}_{n,k}(s) \boldL_{r}^k, \qquad
    a^{[r]}_{n,n}(s)=1,
\label{Bn=series-Lr}
\end{equation}
 do not depend on $s$ if $k=0,\dotsc,n$, as in the proof of
 \lemref{lem:construction-L0:mkp}. 

 Substituting the expansion \eqref{Bn=series-Lr} into the
 Zakharov-Shabat equation \eqref{general-zs:mkp}, we have the following
 due to \eqref{general-laxr:def:gn}:
\begin{equation*}
 \begin{split}
    \der_{t_n}\boldB_m &=
    [\der_{t_m}-\boldB_m, \boldB_n]
\\
    &=
    \sum_{k=0}^n \der_{t_m} a^{[r]}_{n,k} \boldL_r^k
    +
    \sum_{k< 0} [\der_{t_m}-\boldB_m, a^{[r]}_{n,k}(s)] \boldL_r^k
\\
    &\ \ +
    \sum_{k\leq n}a^{[r]}_{n,k}(s)[\der_{t_m}-\boldB_m, \boldL_r^k]
\\
    &\equiv
    \sum_{k=0}^n \der_{t_m} a^{[r]}_{n,k} \boldL_r^k
    +
    \sum_{k\leq n}
    a^{[r]}_{n,k}(s)\, (k g_m \Lambda^{k-1-r}+\dotsb)
    \mod\Mat^{m-1}.
 \end{split}
\end{equation*}
 Since $\der_{t_n}\boldB_m\in\Mat^{m-1}$, we have for each
 $m=1,2,\dotsc$ and $n>m+r$ (i.e., $n-1-r>m-1$),
\begin{equation}
    \der_{t_m} a^{[r]}_{n,k}
    =
    \begin{cases}
     0 & (n-r\leqq k\leqq n),\\
     -ng_m & (k=n-1-r).
    \end{cases}
\label{da[r]n,k/dtm}
\end{equation}
 The equations \eqref{da[r]n,k/dtm} imply that the function
\begin{equation}
    f_n(t)
    :=
    -\frac{1}{n} (a^{[r]}_{n,n-1-r}(t)-a^{[r]}_{n,n-1-r}(t=0))
\label{correction-term:mkp:fn}
\end{equation}
 satisfies
\[
    \der_{t_m} f_n(t) = g_m(t),\qquad
    f_n(t=0)=0,
\]
 if $n>m+r$.

 As in \secref{sec:general-zs->lax:kp}, where we take the limit of $f_m$
 \eqref{correction-term:kp:fm}, we can take the limit
 $f:=\lim_{n\to\infty}f_n$ in the topology of $\Comp[[t]]$ defined
 by valuation, $\val t_n = n$. This limit $f$ satisfies
\begin{equation}
    \der_{t_m} f(t) = g_m(t), \qquad    f(t=0)=0. 
\label{dtmf=gm:mkp}
\end{equation}
 The matrix $\boldL_{r+1}$ satisfying \eqref{general-lax-r:mkp} for
 $r+1$ instead of $r$ is obtained by
 $\boldL_{r+1}:=\boldL_r-f\boldL_r^{-r}$. In fact,
\[
 \begin{split}
    [\der_{t_n}-\boldB_n, \boldL_{r+1}]
    &=
    [\der_{t_n}-\boldB_n, \boldL_r] - (\der_{t_n} f) \boldL_r^{-r}
    - f [\der_{t_n}-\boldB_n,\boldL_r^{-r}]
\\
    &\equiv
    f [\der_{t_n}-\boldB_n, \boldL_r^{-r}]
    \mod \Mat^{-r-1},
 \end{split}
\]
because of \eqref{general-laxr:def:gn} and \eqref{dtmf=gm:mkp}. Since
\[
    [\der_{t_n}-\boldB_n, \boldL_r^{r}]
    =
    \sum_{j=0}^r
    \boldL_r^{j} [\der_{t_n}-\boldB_n, \boldL_r] \boldL_r^{r-j-1}
    \in \Mat^{-1},
\]
 we have
$
    [\der_{t_n}-\boldB_n,\boldL_r^{-r}]
    =-\boldL_r^{-r}[\der_{t_n}-\boldB_n, \boldL_r^{r}]\boldL_r^{-r}
    \in\Mat^{-2r-1}\subset\Mat^{-r-1}
$.
 Hence, $[\der_{t_n}-\boldB_n, \boldL_{r+1}]\in\Mat^{-r-1}$. Thus the
 induction has been completed. 

 Moreover $\boldL_{r}-\boldL_{r+1}=f \boldL_r\in\Mat^{-r}$ as desired.
\end{proof}

Taking the limit $\boldL:=\lim_{r\to\infty}\boldL_r$, we obtain the
$\boldL$-operator of the form \eqref{general-L:mkp} satisfying the Lax
equations \eqref{general-lax:mkp}. This completes the proof of
\propref{prop:general-zs->lax:mkp}.
\qed

\subsection{Recovery of the mKP hierarchy}
\label{subsec:recovery-mkp}

Now we can obtain $\boldB$-operators of the mKP hierarchy from the given
$\boldB$-operators by the coordinate change.

\begin{thm}
\label{thm:general-zs->lax:mkp}
 Let $\{\boldB_n\}_{n=1,2,\dotsb}$ be a sequence of matrices in
 \propref{prop:general-zs->lax:mkp} and $\boldL$ be the matrix obtained
 there. 
 
 Then there is a coordinate change,
\begin{equation}
    t_n \mapsto \tilde t_n = t_n + O(t_{n+1},t_{n+2},\dotsc)
\label{coord-change:general-zs->mkp}
\end{equation}
 and a function $f_0=f_0(t_1,t_2,\dotsc)$ such that:
\begin{itemize}
 \item The operator $L$ satisfies Lax equations,
\begin{equation}
    \frac{\der \boldL}{\der \tilde t_n}
    =
    [\tilde \boldB_n, \boldL],
    \qquad
    \tilde \boldB_n:=(\boldL^n)_+,
\label{mkp-h:general}
\end{equation}
 with respect to the new time variables $(\tilde t_n)_{n=1,2,\dotsc}$.  
 Therefore $\boldL$ is a Lax operator of the matrix mKP hierarchy
 with respect to the variables $(\tilde t_n)_{n=1,2,\dotsc}$.

 \item Each $\boldB_n$ is expressed in terms of $\boldL$ as
\begin{equation}
    \boldB_n
    =
    \tilde \boldB_n
    + \frac{\der\tilde t_{n-1}}{\der t_n} \tilde \boldB_{n-1}
    + \dotsb +
    \frac{\der\tilde t_1}{\der t_n} \tilde \boldB_1 +
    \frac{\der f_0}{\der t_n}.
\label{boldBn->tildeboldBn}
\end{equation}
\end{itemize}
\end{thm}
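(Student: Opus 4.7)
The plan is to follow the argument of the proof of \thmref{thm:general-zs->lax:kp} in the matrix setting; the main simplification is that the correction function $f_0$ is automatically $s$-independent (a true scalar), so we never need to conjugate by $e^{f_0}$.

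First, write $\boldB_n = \sum_{k \leq n} \varphi_{n,k}(s,t)\, \boldL^k$ with $\varphi_{n,n} = 1$, using $\boldL$ from \propref{prop:general-zs->lax:mkp}. The Lax equations guarantee $[\boldB_n, \boldL] = \der \boldL/\der t_n \in \Mat^0$, and since $[\varphi(s),\boldL]$ has leading term $(\varphi(s) - \varphi(s+1))\Lambda$, an induction on $k$ downward from $n$ (exactly the mechanism used in the proof of \lemref{lem:construction-L0:mkp}) shows that $\varphi_{n,k}$ is independent of $s$ for $0 \leq k \leq n$. Next, because $\boldL = \Lambda + u_0 + \dotsb$ is invertible with $\boldL^{-1} = \Lambda^{-1} + O(\Lambda^{-2})$ strictly lower triangular, $\boldL^k$ is strictly lower triangular for every $k < 0$, so $\tilde\boldB_k = (\boldL^k)_+ = 0$ for $k < 0$. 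Taking the upper-triangular part of the expansion therefore yields $\boldB_n = \sum_{k=0}^n \varphi_{n,k}(t)\, \tilde\boldB_k$.

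Second, plug the $\boldL$-expansions of $\boldB_n$ and $\boldB_m$ into the Zakharov-Shabat equations. Using $[\varphi_{n,k},\boldL] = 0$ for $k \geq 0$, the Lax equation for $\boldL$, and the fact that the $\varphi_{n,k}$'s for $k\geq 0$ are scalars in $s$ (hence mutually commuting and commuting with $\boldL$), the computation parallels \eqref{sum(dphi/dt-dphi/dt)Lk:kp} and gives $\sum_{k \geq 0} (\der_{t_m}\varphi_{n,k} - \der_{t_n}\varphi_{m,k})\boldL^k \in \Mat^{-1}$. Reading off coefficients of $\Lambda^k$ inductively downward from $k = \max(n,m)$ yields the compatibility $\der_{t_m}\varphi_{n,k} = \der_{t_n}\varphi_{m,k}$ for all $k \geq 0$. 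Integrating produces scalar functions $f_k(t)$ with $f_k(0) = 0$ and $\der f_k/\der t_n = \varphi_{n,k}$; since $\varphi_{k,k}=1$ and $\varphi_{n,k} = 0$ for $n<k$, we have $f_k(t) = t_k + O(t_{k+1}, \dotsc)$ for $k \geq 1$. Define $\tilde t_n := f_n(t)$.

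Third, verify the Lax equations in the new coordinates. From $\boldB_n^c := \boldB_n - \sum_{k=0}^n\varphi_{n,k}\boldL^k = -\sum_{k=0}^n\varphi_{n,k}(\boldL^k)_-$ and $[\boldB_n,\boldL] = [\boldB_n^c,\boldL]$, one finds $\der\boldL/\der t_n = [\boldB_n,\boldL] = \sum_{k=1}^n \varphi_{n,k}[\tilde\boldB_k,\boldL] = \sum_{k=1}^n (\der\tilde t_k/\der t_n)[\tilde\boldB_k,\boldL]$. Since $\der\tilde t_k/\der t_n = 0$ for $k > n$, comparing with $\der\boldL/\der t_n = \sum_{k=1}^n (\der\tilde t_k/\der t_n)\, \der\boldL/\der\tilde t_k$ and inverting the lower-triangular Jacobian with unit diagonal gives \eqref{mkp-h:general}. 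The formula \eqref{boldBn->tildeboldBn} then follows from $\boldB_n = \tilde\boldB_n + \sum_{k=1}^{n-1}\varphi_{n,k}\tilde\boldB_k + \varphi_{n,0}$ via the identifications $\varphi_{n,k} = \der\tilde t_k/\der t_n$ ($k\geq 1$) and $\varphi_{n,0} = \der f_0/\der t_n$. The main obstacle is the bookkeeping in the Zakharov-Shabat computation, where one must track how the $s$-dependent negative-order pieces of the $\boldL$-expansions contribute only to $\Mat^{-1}$, so that the symmetry of the $\der_t \varphi$'s can be read off from the non-negative orders; once this is handled, the rest of the argument follows the KP template essentially verbatim.
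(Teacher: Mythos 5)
Your proof is correct, but it takes a different route from the paper's. The paper explicitly notes that the theorem ``can be proved in a similar way as the case of the KP hierarchy'' --- which is exactly what you do --- but then chooses ``another method'': it introduces a dressing matrix $\boldWhat$ with $\boldL=\boldWhat\Lambda\boldWhat^{-1}$, gauges the connection to $\boldB_n^W=\boldWhat^{-1}\boldB_n\boldWhat-\boldWhat^{-1}\der_{t_n}\boldWhat$, and uses $[\boldB_n^W,\Lambda]=0$ to make the coefficients $s$-independent \emph{exactly}, so that the zero-curvature equation collapses to the clean closedness relation $\der_{t_n}\boldB_m^W=\der_{t_m}\boldB_n^W$ with no ``modulo lower order'' bookkeeping; the price is introducing the (non-unique) $\boldWhat$ and the gauge term. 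Your version instead expands $\boldB_n$ directly in powers of $\boldL$, uses $[\boldB_n,\boldL]\in\Mat^0$ to get $s$-independence of $\varphi_{n,k}$ only for $k\geqq0$, and recovers $\der_{t_m}\varphi_{n,k}=\der_{t_n}\varphi_{m,k}$ by a downward induction on $k$ after discarding the strictly lower-triangular remainders $\boldB_n^c$ --- the exact analogue of the paper's KP-case proof (the duality the paper itself records in \remref{rem:general-zs->lax:kp:another-proof}). Two points in your argument are worth making explicit and are correct as stated: first, that $[a(s),\Lambda]=(a(s)-a(s+1))\Lambda$ has the \emph{same} $\Lambda$-degree as $\Lambda$ (unlike $[\varphi,\der]=-\varphi'$, which drops order by one), which is precisely why the constancy extends down to $k=0$ and why $f_0$ is a genuine scalar, so no conjugation by $e^{f_0}$ is needed; second, that $(\boldL^k)_+=0$ for $k<0$ because $\boldL^{-1}$ is strictly lower triangular, which is what lets the $s$-dependent negative-order coefficients drop out of \eqref{boldBn->tildeboldBn} entirely.
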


\begin{proof}
 We can prove the theorem in a similar way as the case of the KP
 hierarchy in \secref{sec:general-zs->lax:kp}. Here we use another
 method. 

 It is easy to show that there exists a lower triangular matrix
 $\boldWhat$,
\begin{equation}
    \boldWhat(t)
    =
    \sum_{j=0}^\infty w_j(s) \Lambda^{-j}, \qquad w_j(s)=w_j(s;t),\ 
    w_0(s;t)=1,
\label{W:general-zs-mkp}
\end{equation}
 which satisfies
\begin{equation}
    \boldL = \boldWhat \Lambda \boldWhat^{-1}.
\label{general-zs:L=WLambdaW-1:mkp}
\end{equation}
 In fact, we have only to compare the coefficients of $\Lambda^{-j}$
 ($j=1,2,\dotsc$) in the equation
\begin{equation*}
    \boldL\boldWhat=\boldWhat\Lambda,
    \text{ i.e., }
    \sum_{j=0}^\infty
     \left(\sum_{k=0}^j u_{1-k}(s)\, w_{j-k}(s+1-k)\right)
    \Lambda^{1-j}
    =
    \sum_{j=0}^\infty w_j(s) \Lambda^{1-j},
\end{equation*}
 and determine $w_j(s)$ recursively. (The solution is not unique.) 

 Taking the adjoint of the Lax equations \eqref{general-lax:mkp} by
 $\boldWhat^{-1}$, we have
\begin{equation}
    \left[
     \frac{\der}{\der t_n} - \boldB_n^W, \Lambda
    \right]
    =0,
\label{[d-BnW,Lambda]=0:mkp}
\end{equation}
 where
\begin{equation}
    \boldB_n^W
    =
    \sum_{j=0}^\infty b^{(n),W}_{n-j}(s) \Lambda^{n-j}
    :=
    \boldWhat^{-1}\boldB_n\boldWhat
    -
    \boldWhat^{-1}\frac{\der \boldWhat}{\der t_n}.
\label{def:BnW:mkp}
\end{equation}
 The equation \eqref{[d-BnW,Lambda]=0:mkp} means that
\begin{equation}
    [\boldB_n^W,\Lambda]=0,
\label{[BnW,Lambda]=0}
\end{equation}
 namely, the coefficients $b^{(n),W}_{n-j}=b^{(n),W}_{n-j}(s)$ in the
 expansion \eqref{def:BnW:mkp} do not depend on $s$. The adjoint of the
 Zakharov-Shabat equation \eqref{general-zs:mkp} by $\boldWhat^{-1}$ is
\[
    \left[
     \frac{\der}{\der t_m} - \boldB_m^W, 
     \frac{\der}{\der t_n} - \boldB_n^W
    \right] = 0,
\]
 which reduces to
\begin{equation}
    \frac{\der \boldB_m^W}{\der t_n} = \frac{\der \boldB_n^W}{\der t_m},
\label{dBmW/dtn=dBnW/dtm:general-mkp}
\end{equation}
 because of $[\boldB_m^W,\boldB_n^W]=0$ which follows from
 \eqref{[BnW,Lambda]=0} and independence of $b^{(n),W}_{n-j}$ from
 $s$. Since each $\boldB_n^W$ is expanded as \eqref{def:BnW:mkp}, the
 coefficients of $\Lambda^k$ ($k\geqq0$) in the equations
 \eqref{dBmW/dtn=dBnW/dtm:general-mkp} ($m,n=1,2,\dotsc$) form the
 compatibility condition of the system
\begin{equation}
    \frac{\der f_k}{\der t_n} = b^{(n),W}_{k}, \qquad n=1,2,\dotsc,
\label{dfk/dtn:mkp}
\end{equation}
 where $b^{(n),W}_k=0$ for $k>n$. As $b^{(n),W}_n=b^{(n)}_n=1$, the
 solution of the system \eqref{dfk/dtn:mkp} with the initial condition
 $f_k(0)=0$ has the form
\begin{equation}
    f_k(t) = t_k + O(t_{k+1},t_{k+2},\dotsc)
\label{fk(t):general-mkp}
\end{equation}
 for $k\geqq1$. Let us use this solution $f_k(t)$ as the new
 independent variable $\tilde t_k$ ($k\geqq1$). By the definition
 \eqref{def:BnW:mkp} of $\boldB_n^W$, we have
\begin{equation*}
    \boldB_n
    -
    \frac{\der \boldWhat}{\der t_n} \boldWhat^{-1}
    =
    \sum_{j=0}^\infty b^{(n),W}_{n-j}
    \boldWhat\Lambda^{n-j}\boldWhat^{-1}.
\end{equation*}
 Taking the upper half part of this equation and using
 \eqref{general-zs:L=WLambdaW-1:mkp}, we have
\begin{equation}
    \boldB_n
    =
    \sum_{k=1}^n \frac{\der \tilde t_k}{\der t_n} \tilde\boldB_k
    +\frac{\der f_0}{\der t_n}, \qquad
    \tilde \boldB_k:=(\boldL^k)_+
\label{general-Bn=sum(dtk/dtn)tildeBk}
\end{equation}
 This is the statement \eqref{boldBn->tildeboldBn} of the theorem.

 Note that, as each $\tilde t_k=f_k(t)$ has the form
 \eqref{fk(t):general-mkp}, the Jacobian matrix of the transformation
 $(t_1,t_2,\dotsc) \mapsto (\tilde t_1,\tilde t_2,\dotsc)$ is an
 infinite upper half triangular matrix with $1$ on the
 diagonal. Therefore the inverse transformation $(\tilde t_1,\tilde
 t_2,\dotsc) \mapsto (t_1,t_2,\dotsc)$ has the same type of the Jacobian
 matrix. Multiplying that Jacobian matrix to the vector equation
\[
    \left(
    \frac{\der \boldL}{\der t_1}, \frac{\der \boldL}{\der t_2}, \dotsc
    \right)
    =
    \left(
    [\boldB_1,\boldL], [\boldB_2,\boldL], \dotsc
    \right),
\]
 which is the sequence of Lax equations \eqref{general-lax:mkp} obtained
 in \propref{prop:general-zs->lax:mkp}, we obtain the Lax equation
\[
    \frac{\der \boldL}{\der \tilde t_n}=[\tilde\boldB_n,\boldL],
    \qquad
    n=1,2,\dotsc,
\]
 as a consequence of \eqref{general-Bn=sum(dtk/dtn)tildeBk}. (The term
 $\der f_0/\der t_n$ in \eqref{general-Bn=sum(dtk/dtn)tildeBk} does not
 depend on $s$. Hence, $[\boldB_n,\boldL]=\sum_{k=1}^n (\der \tilde
 t_k/\der t_n)[\tilde \boldB_k,\boldL]$.) This is the desired Lax
 equation \eqref{mkp-h:general}.

\end{proof}

\section{Concluding remarks}
\label{sec:concluding-remarks}

In the present paper we proved that the Zakharov-Shabat equations are
sufficient even without Lax operators to define the integrable
hierarchies. The Lax operators of the KP hierarchy and the mKP hierarchy
are constructed from the information of the Zakharov-Shabat equations. 

There are several problems left to be considered:
\begin{itemize}
 \item The modified KP hierarchy has another formulation by
       (micro)differential operators which is, roughly speaking, an
       infinite collection of the KP hierarchies and compatibility
       conditions among them. See \cite{dic:99} and \cite{take:02} for
       details. We would be able to rewrite our results in this
       formulation.
 \item The problem for the Toda lattice hierarchy \cite{uen-tak:84} was
       considered in \cite{take:92} and it was claimed that the positive
       answer corresponding to \thmref{thm:general-zs->lax:kp} or
       \thmref{thm:general-zs->lax:mkp} was proved. But several serious
       gaps have been found in the proof. Especially non-monic operators
       in the Toda lattice hierarchy make the things complicated. We
       tried to reduce the problem to our results for the mKP hierarchy
       by gauge transformations (cf.\ \cite{take:90}, Proposition 1.3),
       but they cannot simplify the things completely because of the
       changes of independent variables.
 \item Similar statements are expected for other integrable hierarchies
       like the BKP hierarchy, reductions of the KP hierarchy or the
       Toda lattice hierarchy and dispersionless type hierarchies.
\end{itemize}
It is desirable to clarify these problems in further study.

\section*{Acknowledgements}
We are grateful to I. M. Krichever who suggested this problem of finding
the Lax representation from the Zakharov-Shabat equations to the second
author (T.T.) in 1991. The authors are also grateful to those who gave
valuable advices, in particular, K. Takasaki and T. Shiota.

T.T. published a part of the contents of the present paper in
\cite{take:92}, following the idea for the KP case of the first author
(M.N.), who was reluctant to be a coauthor at that time. Preparing a
talk at the memorial conference dedicated to M.N., T.T. found numerous
typos and gaps in the proofs in \cite{take:92}, so he decided to revise
the article. As M.N. once expressed regret, saying that maybe he should
have become a coauthor of \cite{take:92}, T.T. decided to publish this
paper in this form. T.T. thanks M.N.'s family for permission of
this publication.

T.T. is supported by BJNSF International Scientist Program Fund.


\end{document}